\newcommand{\tr}{^\intercal}
\newcommand{\mat}[1]{\bm{\mathrm{#1}}}
\newcommand{\tmat}[1]{%
	\ifthenelse{\equal{#1}{W}}{\bm{{\widetilde{\mathrm{#1}}}}}%
	{\bm{{\tilde{\mathrm{#1}}}}}%
}
\newcommand{\hmat}[1]{\bm{{\hat{\mathrm{#1}}}}}
\newcommand{\argmin}{\operatorname*{arg\,min}}
\newcommand{\setT}{\mathscr T}
\newcommand{\setD}{\mathscr D}
\newcommand{\setF}{\mathscr F}
\newcommand{\E}{\mathbb{E}}
\newcommand{\V}{\operatorname{Var}}
\newcommand{\bc}{\beta_c}
\newcommand{\bhc}{\hat\beta_c}
\newcommand{\btc}{\tilde \beta_c}
\newcommand{\bs}{\beta_{\star}}
\newcommand{\bhs}{\hat \beta_{\star}}
\newcommand{\bts}{\tilde \beta_{\star}}
\newcommand{\sumc}{\sum_{i \in \mathcal C}}
\newcommand{\sums}{\sum_{i \notin \mathcal C}}
\newcommand{\loss}{\mathcal O}
\renewcommand{\l}{\lambda}
\renewcommand{\P}[1]{\mathbb P\left\{#1\right\}}
\newcommand{\grad}[1]{%
  \raisebox{-2pt}{$\mathlarger{\mathlarger{\mathlarger{\nabla}}}\!\!_{#1}$}%
}
\newcommand{\minimaset}[2][]{%
  \ifthenelse{\isempty{#1}}%
    {\mathcal{B}^{#2}_{\lambda}}%
    {\mathcal{B}^{#2}_{#1}}%
}
\newcommand{\maminimaset}[2][]{%
  \ifthenelse{\isempty{#1}}%
    {\check{\mathcal{B}}^{#2}_{\lambda}}%
    {\check{\mathcal{B}}^{#2}_{#1}}%
}
\newcommand{\pense}{\texttt{pense}}
\newcommand{\Cor}{\operatorname{Cor}}
\newcommand*{\addFileDependency}[1]{
\typeout{(#1)}
%
%
\@addtofilelist{#1}
%
\IfFileExists{#1}{}{\typeout{No file #1.}}
}\makeatother
\newcommand*{\myexternaldocument}[1]{%
  \externaldocument{#1}%
  \addFileDependency{#1.tex}%
  \addFileDependency{#1.aux}%
}
\newcommand{\blind}{0} 
\algnewcommand\algorithmicinput{\textbf{Input:}}
\algnewcommand\Input{\item[\algorithmicinput]}
\newtheorem{proposition}{Proposition}
\title{}
\author{Siqi Wei, David Kepplinger}
\date{September 2022}
\def\spacingset#1{\renewcommand{\baselinestretch}%
    {#1}\small\normalsize} \spacingset{1}
\begin{document}

\if0\blind
{
    \title{\bf Stable and Robust Hyper-Parameter Selection Via Robust Information Sharing Cross-Validation}
    \author{David Kepplinger\thanks{
    This project was supported by resources provided by the Office of Research Computing at George Mason University (URL: https://orc.gmu.edu) and funded in part by grants from the National Science Foundation (Award Number 2018631).}    \hspace{.2cm}\\
    and \\
    Siqi Wei \\
    Department of Statistics, George Mason University}
    \maketitle
}
\fi

\if1\blind
{
    \bigskip
    \bigskip
    \bigskip
    \begin{center}
        {\LARGE\bf Information Sharing for Robust and Stable Cross-Validation}
    \end{center}
    \medskip
} \fi

\bigskip


\begin{abstract}
Robust estimators for linear regression require non-convex objective functions to shield against adverse affects of outliers.
This non-convexity brings challenges, particularly when combined with penalization in high-dimensional settings.
Selecting hyper-parameters for the penalty based on a finite sample is a critical task.
In practice, cross-validation (CV) is the prevalent strategy with good performance for convex estimators.
Applied with robust estimators, however, CV often gives sub-par results due to the interplay between multiple local minima and the penalty.
The best local minimum attained on the full training data may not be the minimum with the desired statistical properties.
Furthermore, there may be a mismatch between this minimum and the minima attained in the CV folds.
This paper introduces a novel adaptive CV strategy that tracks multiple minima for each combination of hyper-parameters and subsets of the data.
A matching scheme is presented for correctly evaluating minima computed on the full training data using the best-matching minima from the CV folds.
It is shown that the proposed strategy reduces the variability of the estimated performance metric, leads to smoother CV curves, and therefore substantially increases the reliability and utility of robust penalized estimators.
\end{abstract}

\noindent%
{\it Keywords:}  Robust regression, hyper-parameter selection, cross-validation, non-convexity, elastic-net.
\vfill

\newpage
\spacingset{1.8} 

\section{Introduction}

In this paper we revisit a critical issue for applying robust penalized estimators: how to reliably select hyper-parameters of the penalty function.
In practice, cross-validation (CV) is by far the most prevalent strategy used to select hyper-parameters.
Besides certain adjustments of the CV sampling scheme, (e.g., stratified CV), performing multiple replications of CV, or using different evaluation metrics, the general procedure is almost always the same.
While computations can be burdensome CV has become a ubiquitous tool in any statistical learning framework.
Recent advances in asymptotic results for K-fold CV \parencite[][e.g.,]{austern_asymptotics_2020,bates_cross-validation_2024,li_asymptotics_2023} are further underlining the advantages of CV which were previously noticed only empirically.
While these theoretical guarantees for CV do not apply when used for penalized robust estimators with the LASSO or Elastic Net (EN) penalties, some empirical studies suggest good out-of-sample accuracy and variable selection can be achieved by utilizing CV with robust measures of the prediction accuracy \parencite{ronchetti_robust_1997,cohen_freue_robust_2019,khan_robust_2007,maronna_robust_2011,filzmoser_robust_2021,loh_scale_2021,monti_sparse_2021,sun_adaptive_2019,amato_penalised_2021}.
Reproducing these benefits in practical applications, however, is difficult because CV for robust penalized estimators tends to be highly unstable \parencite{she_gaining_2021,kepplinger_robust_2023,}, particularly in the presence of outliers in the response and contamination in the predictors.
Even if there is only measurement errors naïve leave-one-out CV with the non-robust LASSO fails \textcite{datta_note_2019}, and
issues tend to be much more severe when allowing for arbitrary contamination and using non-convex estimators.

Practically these issues manifest in very different answers for different random CV splits of the data.
These differences are often substantial and affect both the estimate of the prediction accuracy and hyper-parameters selection, leading to questionable results.
In the following we will illuminate the issues underlying the instability of CV for robust penalized estimators.
We further propose a novel CV strategy, called Robust Information Sharing (RIS) CV, which remedies these issues and provides faster, more reliable and stable estimation of prediction accuracy and thus hyper-parameter selection.

We focus on robust penalized estimators for the linear regression model,
\begin{equation}\label{eqn:lin-reg-model}
y_i = \mat x_i^\intercal \mat\beta_0 + \varepsilon_i, \quad i = 1,\dots,n,
\end{equation}
where the $p$-dimensional predictors $\mat x_i$ take on real values and the errors $\varepsilon_i$ are i.i.d.\ following an arbitrary symmetric distribution.
Given $n$ observations the primary goals are to predict out-of-sample responses for a new observation $\mat x^*$ and identify non-zero coefficients in $\mat\beta_0$.

A substantial body of literature on penalized regression estimators is concerned with indirect, fit-based criteria for model selection, like AIC and BIC.
These fit-based criteria have also been adapted for robust estimation, e.g., the robust BIC criterion \parencite{alfons_sparse_2013} or the predictive information criterion \parencite{she_gaining_2021}.
In practice, however, CV is still the dominating strategy for various reasons, including because it allows for comparisons between different estimators and because it does not rely on a robust estimate of the scale, which is itself a very challenging problem in high dimensions \parencite{maronna_correcting_2010,reid_study_2016,loh_scale_2021,dicker_variance_2014,fan_variance_2012}.

Considering a potentially large number of predictors, $p$, and less than half of the observations deviating from the model~\ref{eqn:lin-reg-model}, we are interested in selecting hyper-parameters for robust penalized regression estimators.
In this work we focus on elastic-net (EN) penalized M- and S-estimators (PENSEM and PENSE, respectively) as proposed in \textcite{cohen_freue_robust_2019}, defined as the minimizer of the objective function
\begin{equation}\label{eqn:objf}
\loss(\mat y - \mat X \mat\beta; \lambda, \alpha) := \ell(\mat y - \mat X \mat\beta ) + \lambda \sum_{j=1}^p \left(\frac{1-\alpha}{2}\beta_j^2 + \alpha |\beta_j| \right),
\end{equation}
and their extensions to the adaptive EN penalty \parencite[adaptive PENSE/PENSEM;][]{kepplinger_robust_2023-1}.
To apply these estimators successfully in practice, appropriate values for the hyper-parameters $\lambda > 0$ and $\alpha \in [0, 1]$, which govern the strength and form of the EN penalty, must be chosen in a data-driven fashion.
The exact value of $\alpha$ is typically less critical to the prediction accuracy of the estimate than the strength of the penalization, $\lambda$.
We will therefore focus on selecting $\lambda$ to simplify the exposition.

Robustness in~\eqref{eqn:objf}, and hence stability and reliability in the presence of outliers in the response and contamination in the predictors, is achieved through a robust loss function, $\ell$.
The loss for the robust penalized M-estimator is $\ell_M(\mat r) = \frac{1}{2 n} \sum_{i=1}^n \rho(r_i / s)$, with a pre-determined scale of the error term, $s > 0$.
For the penalized S-estimator, the loss function is the M-scale of the residuals, $\ell_S(\mat r) = \frac{1}{2} \sigma^2_M(\mat r)$, defined implicitly by the equation
\begin{equation}\label{eqn:mscale}
\delta = \frac{1}{n} \sum_{i=1}^n \rho \left( \frac{r_i}{\sigma^2_M(\mat r)} \right).
\end{equation}
For both the M- and S-loss a bounded and hence non-convex $\rho$ function is necessary to achieve high robustness towards arbitrarily contaminated data points.
\textcite{cohen_freue_robust_2019} showed that $\delta$ in~\eqref{eqn:mscale} is the finite-sample breakdown point of (adaptive) PENSE, i.e., the proportion of observations that can be arbitrarily without leading to an infinitely biased  estimate.
The $\rho$ function is usually chosen to behave like the square function around 0 and smoothly transition to a constant beyond a certain cutoff value.
Typical examples are Tukey's bisquare or the LQQ function \parencite{koller_sharpening_2011}.
The boundedness is necessary to achieve the desired robustness, but it leads to a non-convex objective function.

The primary objective in this paper is to select the overall penalization level, $\lambda$, for robust penalized estimators.
The main contribution is two-fold.
First, in Section~\ref{sec:issues}, we illuminate and investigate the major drivers causing issues in applying CV to robust penalized estimators, i.e., the combination of a non-convex objective function with local minima determined by outliers and the penalty function.
Second, we propose a new, robust and reliable cross-validation strategy, called Robust Information Sharing CV (RIS-CV), that simultaneously tackles these issues while also reducing the computation burden compared to regular CV (N-CV).
While RIS-CV is applicable to any robust penalized estimator, we will use (adaptive) PENSE for illustration purposes throughout most of this paper.

\subsection{Notation}

The index set of the complete training data of $n$ observations is denoted as $\setT = \{1, \dotsc, n \}$.
The subsets of the training data used to estimate the parameters in the $K$ CV folds are denoted by $\setF_1, \dotsc, \setF_K$, $\setF_k \subset \setT$.
The set of minima for a given data set $\setD$ and penalty parameter $\lambda$ is denoted by $\minimaset{\setD} = \left\{ \mat\beta \colon \grad{\mat\beta} \loss(\mat\beta; \setD, \lambda) = \mat 0 \right\}$.
We assume that only the $M \geq 1$ best minima are retained and that the minima are ordered by the value of the loss function, i.e., in a set of minima $\minimaset{\setD} = \{ \mat\beta_1, \dotsc, \mat\beta_M \}$, $\loss(\mat\beta_1; \setD, \lambda) \leq \loss(\mat\beta_2; \setD, \lambda) \leq \cdots \leq \loss(\mat\beta_M; \setD, \lambda)$.
If the data set on which the objective function is being evaluated is obvious from the context, $\setD$ will be omitted from the notation.

None of the non-convex optimization routines employed in this paper can guarantee to find the actual global minimum.
Any notion of a ``global'' minimum is therefore to be understood as the local minimum with the smallest value of the objective function, among all local minima uncovered by the non-convex optimization routine.
We will denote this presumptive global minimum, i.e., the first of minimum in $\minimaset{\setD}$, by $\mat\beta^*(\l; \setD)$.

\section{Cross-validation for Robust Penalized Estimators}\label{sec:issues}

Before we shed light on why standard, or ``naïve'', CV (N-CV) fails for robust penalized estimators, we give a brief review of standard N-CV commonly used for robust and non-robust penalized regression estimators.

For N-CV we first compute the global minimizers of \eqref{eqn:objf} over a fine grid of $\l$ values, $\mathcal L = \{ \l_1, \dotsc, \l_K\}$, $\l_1 > \l_2 > \cdots > \l_q > 0$, using all available observations, $\setT$.
To estimate the prediction accuracy of these minimizers, N-CV randomly splits the training data $\setT$ into $K$ approximately equally sized subsets, or folds, $\setF_1, \dotsc, \setF_K$ with $\setF_k \subset \setT$ such that $\bigcup_k \setF_k = \setT$ and $\setF_k \cap \setF_{k'} = \emptyset$ for all $k \neq k'$.
On each of these CV folds, the global minimizers of \eqref{eqn:objf} are computed using only observations in $\setT \setminus \setF_k$ using the same penalty grid $\mathcal L$ as for the complete training data.
Denoting these global minimizers by $\hmat\beta_{k,\l}$, we compute the prediction errors on the left-out observations as $e_{i,\l} = y_i - \mat x_i\tr \hmat\beta_{k,\l}$,$i \in \setF_k$.
The prediction accuracy of the estimate $\hmat\beta_\l$ is then estimated for each $\l \in \mathcal L$ by summarizing the prediction errors, usually using a measure of the scale of these prediction errors.
We will denote that measure of prediction accuracy as $S(\l)$.
The prevalent choice for $S$ is the root mean-square prediction error (RMSPE),
$$
\widehat{\text{RMSPE}}(\l) = \sqrt{\frac{1}{n}\sum_{i=1}^n e_{i, \l}^2}.
$$
In the potential presence of outliers, however, it is commonly argued \parencite{cohen_freue_robust_2019,kepplinger_robust_2023-1,smucler_robust_2017,she_gaining_2021} that robust estimators of the prediction accuracy should be used since outliers are not expected to be well predicted by the model.
Common choices are the median absolute prediction error (MAPE) or the $\tau$-size \parencite{yohai_high_1988} of the prediction errors, given by
\begin{align*}
\widehat{\text{MAPE}}(\l) &= \operatorname*{Median}_{i = 1, \dotsc, n} |e_{i, \l}|,\\
\hat\tau(\l) &= \widehat{\text{MAPE}}(\l) \sqrt{
  \frac{1}{n} \sum_{i=1}^n \min \left( c_\tau,
    \frac{ | e_{i,\l} | }{ \widehat{\text{MAPE}}(\l) }
  \right)^2},
  \quad c_\tau > 0.
\end{align*}

To reduce the Monte Carlo error incurred by a single CV split and to obtain a rough estimate of the variance of the error measure, CV can be repeated with different random splits.
With $R$ replications of K-fold N-CV, the estimated prediction accuracy using metric $S$ is
\begin{align*}
\hat S(\l) &= \frac{1}{R} \sum_{r=1}^R \hat S^{(r)}(\l),&
\text{SE}(\hat S(\l)) &= \sqrt{\frac{1}{R} \sum_{r=1}^R \left( \hat S^{(r)}(\l) - \hat S(\l) \right)^2}.
\end{align*}

The penalty level $l$ is then chosen as the one leading to the smallest measure of prediction error, $\hat\l = \min_{\l \in L} \hat S(\l)$.
Alternatively, $\l$ can be chosen considering the estimated standard errors, e.g., using th ``1-SE-rule''.
In practice, more utility lies in a plot of the prediction accuracy against the penalization strength.
The ``CV curve'' plots both the estimated prediction accuracy and the estimated standard error against the penalization strength or the $L_1$ norm of the estimates at the different $\l \in \mathcal L$.
The practitioner can then choose the hyper-parameter leading to the best prediction accuracy or may choose a hyper-parameter that better balances model complexity with prediction accuracy.

Figure~\ref{fig:appl-cav-ind-select} shows this CV curve for a classical (least-squares based) EN estimator (left) and the robust adaptive PENSE estimator (right) in the real-world application from Section~\ref{sec:appl-cav}.
It is obvious that the CV curve provides valuable insights into the effects of penalization and overall prediction accuracy of the the models on the penalization path, with two striking observations.
First, the classical adaptive EN estimator does not seem to perform particularly well in this example, with the intercept-only model (at the far right with the highest considered penalization strength) yielding almost as good as a prediction accuracy as the less sparse estimates.
Clearly, a practitioner may question the applicability of the EN estimator or the linear regression model in this case.
Second, the CV curve for adaptive PENSE is highly non-smooth.
While adaptive PENSE seems to find models with better prediction accuracy than the intercept-only model, the standard errors are quite large and the highly irregular shape of the CV curve does not instill much confidence in those estimates.
We will see in Section~\ref{sec:empirical-studies} that the poor performance of classical EN estimator is due to outliers in the training data.
But the more important question is why do small changes in the penalization level lead to such drastic changes in the estimated prediction accuracy for adaptive PENSE?

\begin{figure}[t]
    \centering
    \begin{subfigure}[t]{0.44\textwidth}
        \centering
        \includegraphics[width=\textwidth]{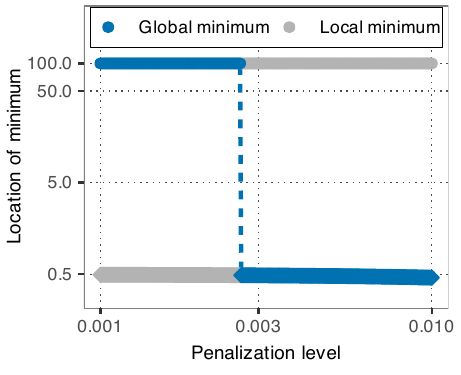}
        \caption{%
        }
        \label{fig:nonsmooth-demo}
    \end{subfigure}
    \hfill
    \begin{subfigure}[t]{0.55\textwidth}
        \centering
        \includegraphics[width=\textwidth]{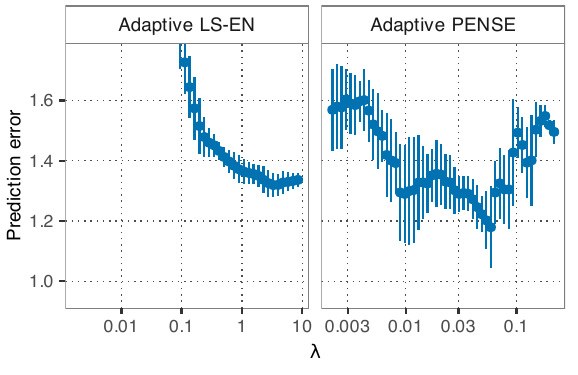}
        \caption{%
        }
        \label{fig:appl-cav-ind-select}
    \end{subfigure}
    \caption{%
    Demonstration of non-smooth penalization paths (a) and cross-validation curve (b). 
    Sub-figure (a) shows the location of two local minima (gray dots) and the non-smootheness of the global minimum (blue dots) in a simple scenario explained in Section~\ref{sec:suppl-nonsmooth-example} of the supplementary materials.
    The CV curves in (b) show the failings of classical adaptive EN estimate and non-smoothness of N-CV with the robust adaptive PENSE estimate in the CAV study from Section~\ref{sec:appl-cav}.
    }
\end{figure}

\subsection{Failings of N-CV for Penalized Robust Estimators}

In short, the blame for the non-smoothness of the PENSE N-CV curve is on the non-convexity of the objective function combined with the presence of outliers and contamination.
Below we will highlight the issues that the non-convexity and outliers create for CV of robust penalized estimators.
Together, these issues may lead to an undesirable coefficient estimate and a cross-validated prediction accuracy that is substantially biased, has high variance, or both.

\paragraph{Non-smoothness of the penalization path.}
Due to the non-convexity of the objective function~\eqref{eqn:objf}, it has in general more than one minimum.
PENSE and other robust penalized estimators are usually defined as the global minimizer of this objective function.
While the objective function is smooth in $\l$, the global minimum is not necessarily so.
In Proposition~\ref{prop:suppl-nonsmooth} in the Supplementary material we show that in the simple univariate setting and under certain conditions we can find at least one $\l$ where the path of the global minimum of a penalized M-estimator, denoted by $\hat\beta^*(\l)$ has a discontinuity, i.e.,
$
\lim_{\delta \to 0} | \hat\beta^*(\l - \delta) - \hat\beta^*(\l + \delta) | > 0.
$
We further provide a simple example setting where these conditions are satisfied with non-zero probability.
Figure~\ref{fig:nonsmooth-demo} shows an instance of this example where the objective function has two minima, one around 0.5 and one around 100, for all $\l \in \mathcal L$.
The global minimum, however, has a discontinuity around $\l = 0.003$, jumping from somewhere around 100 for small $\l$ values to somewhere around 0.5.
While this example shows a univariate penalized M-estimator, similar behavior can be found for PENSE and other robust penalized estimators, as well as in high-dimensional problems.
In these instances, the non-smoothness of the global minimum is even more pronounced (the S-loss tends to have more minima than the M-loss and more covariates tend to lead to more minima).

The primary reason to focus on the global minimizer of the objective function is that it may possess many desirable statistical properties.
The global minimum of the PENSE and adaptive PENSE objective function, for instance, is root-n consistent, finite-sample robust, and for adaptive PENSE possesses the oracle property \parencite{kepplinger_robust_2023-1}.
These properties, however, only pertain to the global minimum at a properly chosen penalty parameter $\l$.
In practice, when many different $\l$ values must be tried, considering only the global minimum is fallacious.
In the example scenario from Figure~\ref{fig:nonsmooth-demo}, the majority of the data follows a linear regression model with $\beta_\star = 100$, and hence an estimate close to that would lead to a reasonably small RMSPE for all $\l$.
If we were to consider only the global minimum (blue points), however, a stark difference would be seen between prediction accuracy with small versus large $\l$ values.
In situations where a larger $\l$ value and hence a sparser solution may be preferable, the focus on the global minimum would preclude selection of a good estimate.

\paragraph{Mismatch between the global minima and the N-CV solutions.}
A related problem arises when estimating the prediction accuracy of the minima of the objective function using N-CV.
The objective function evaluated on a random subset of the data, e.g., on the CV training data $\setD_k = \setT \setminus \setF_k$, may also possess multiple minima.
If the global minimum on $\setD_k$ describes the same signal as the global minimum on the full training data, $\setT$, however, is unknown.
Referring again to the example in Figure~\ref{fig:nonsmooth-demo}, if the global minimum on $\setD_k$ for a small $\l=0.001$ is around $0.5$, it would bear little information for estimating the prediction accuracy of the global minimum on $\setT$, which is around 100.
The more local minima, the greater the chances that the global minimum on $\setD_k$ is unrelated to the global minimum on $\setT$.
Among the $K$ CV folds some may yield a related global minimum, while others do not.
With N-CV, however, the prediction accuracy is estimated from both the related and unrelated minima, potentially introducing  substantial bias and hence leading to nonsensical results.
Using robust measures of prediction accuracy for N-CV, like the MAPE or the $\tau$-size, also does not solve this problem.
The unrelated minima could give prediction errors which appear as outliers and combined with true outliers in the data could outnumber the useful prediction errors and hence break the robust measures and lead to unbounded bias.
Repeating N-CV often enough usually helps to smooth-out these effects, but depending on the severity of the mismatches, a large number of replications may be necessary which creates computational problems and the actual variance of the prediction accuracy may be overestimated.

\section{Robust Information Sharing CV}\label{sec:ris-cv}

We now present Robust Information Sharing Cross-Validation (RIS-CV), combining three remedies to overcome the issues posed by non-convex penalized loss functions and outliers.

\paragraph{Remedy 1: tracking multiple minima.}
These first remedy is to keep track of multiple minima over the entire penalization grid, $\mathcal L$, for each data set $\setD$ (i.e., the full training data and all the CV training data sets).
For every $\l \in \mathcal L$ we retain $M^{(\setD)}_\l$ unique minima, denoted by $\minimaset{\setD}$.
To keep the computational complexity at bay, we limit the maximum number of unique minima to $M$, i.e., $M^{(\setD)}_\l \leq M$.
In the numerical experiments below we set $M=40$.

We have shown in Section~\ref{sec:issues} that the regularization path may not be smooth if $M = 1$.
It is also easy to see that relaxing the penalty monotonically increases the number of local minima.
Moreover, if $\hmat\beta(\l) \in \minimaset{\setD}$ is a minimum of the objective function for $\l$, then for any $\delta > 0$ there exists an $\epsilon > 0$ such that there is a minimum $\hmat\beta(\lambda - \epsilon)$ with $\| \hmat\beta(\l - \epsilon) - \hmat\beta(\l) \| < \delta$.
While this may not be the global minimum of the objective function at $\l - \epsilon$, tracking multiple minima increases the chances that $\hmat\beta(\l - \epsilon) \in \minimaset[\lambda - \epsilon]{\setD}$.
Instead of a single, non-smooth regularization path, with RIS-CV we hence capture multiple smooth regularization paths.
Tracking local minima is sometimes exploited for computational reasons \parencite{kepplinger_robust_2023-1,alfons_robusthd_2016}, but these software implementations nevertheless return and utilize only the global minima for estimation.
In RIS-CV, on the other hand, we harness these multiple smooth regularization paths to improve the reliability of the regularization path and the estimated CV curve.

\paragraph{Remedy 2: reducing the search space.}
Since the objective function is non-convex, the minima uncovered by numerical algorithms depends on the starting point provided to that numerical procedure.
The quality of the best minimum is thus tied to the quality of the starting points.
In particular, robustness properties heavily depend on the starting points being unaffected by outliers and contamination.
Robust estimators typically utilize a semi-guided or random search for outlier-free subsets of the data, computing classical least-squares-based estimate on these subsets to use as starting points.
These strategies are computationally taxing because they involve computing many estimates and/or projections of the data, and computing robust scale estimates many sets of residuals \parencite{alfons_sparse_2013,kepplinger_robust_2023-1}.

Given the goal is to estimate the prediction performance of minima in the full training data, $\setT$, however, it is not necessary to locate all minima (or the global minimum) of the objective function on the CV training data $\setD_k = \setT \setminus \setF_k$.
Only the local minima in the vicinity of the minima in $\minimaset{\setT}$ are actually needed.
Therefore, in RIS-CV we use the minima in $\minimaset{\setT}$ as starting points for the numerical optimization in all CV folds.
This effectively restricts the search space to the region of interest which leads to two major benefits over re-computing starting points using the same costly procedure utilized for the training data $\setT$: (i) faster computation and (ii) increased chance of each minimum in $\minimaset{\setT}$ having a corresponding, related minimum in $\minimaset{\setF_k}$.

\paragraph{Remedy 3: matching minima based on similarity.}
The final remedy is to match the minima in $\minimaset{\setT}$ with their most related minimum in each CV fold, $\minimaset{\setF_k}$.
We propose to measure relatedness of minima based on the similarity of the robustness weights associated with these minima, harnessing that for any $\mat\beta$ and data set $\setD$ the penalized M- and S-loss can be re-cast as a weighted penalized least-squares loss.
For example, using the weights
\begin{equation}\label{eqn:method-weights}
w_i(\mat\beta) = \frac{\rho'(\tilde r_i(\mat\beta)) / \tilde r_i(\mat\beta)}{
  \sum_{k \in \setD} \rho'(\tilde r_k(\mat\beta)) \tilde r_k(\mat\beta) },
\quad
i \in \setD,
\end{equation}
the PENSE objective function can be re-cast as
$$
\loss(\mat\beta; \setD, \lambda) =
  \frac{1}{2 |\setD|} \sum_{i \in \setD} w_i(\mat\beta) (y_i - \mat x_i\tr \mat\beta)^2 +
  \lambda P(\mat\beta),
$$
where $\tilde r_i(\mat\beta) = (y_i - \mat x_i\tr \mat\beta) / \hat\sigma_M(\mat r(\mat\beta))$ are the residuals scaled by their M-scale estimate.

The weights encode the ``inlyingness'' of an observation relative to the regression hyperplane spanned by $\mat\beta$.
Observations close to the hyperplane get larger weight, while observations far away and hence outlying get a weight of 0.
For RIS-CV we define the similarity between two coefficient vectors, $\omega(\mat\beta_1, \mat\beta_2; \setD)$, as the Pearson correlation between the corresponding weight vectors $\mat w(\mat\beta_1)$ and $\mat w(\mat\beta_2)$,
\begin{equation}\label{eqn:method-similarity}
\omega(\mat\beta_1, \mat\beta_2; \setD) = 
  \frac{\frac{1}{|\setD|} \sum_{i \in \setD} w_i(\mat\beta_1) w_i(\mat\beta_2) - 
        \overline{w}(\mat\beta_1) \overline{w}(\mat\beta_2)}
       {\sqrt{ \left[ \frac{1}{|\setD|} \sum_{i \in \setD} w_i(\mat\beta_1)^2 - \overline{w}(\mat\beta_1)^2 \right]
               \left[ \frac{1}{|\setD|} \sum_{i \in \setD} w_i(\mat\beta_2)^2 - \overline{w}(\mat\beta_2)^2 \right]
        }},
\end{equation}
with $\overline{w}(\mat\beta) = \frac{1}{|\setD|} \sum_{i \in \setD} w_i(\mat\beta)$.

We utilize the weight-similarity $\omega(\mat\beta_1, \mat\beta_2; \setD)$ to match minima in $\minimaset{\setT}$ with their closest counterpart in each CV fold, $\minimaset{\setF_k}$.
Specifically, for a set of minima $\minimaset{\setT}$ we define its collection of CV-surrogates from fold $k = 1, \dotsc, K$ as
\begin{equation}\label{eqn:method-surrogate-set}
\maminimaset{\setF_k} = \left\{
  \argmin_{\mat\beta^* \in \minimaset{\setF_k}} \omega(\mat\beta^*, \mat\beta_j; \setF_k)
  \colon
  j = 1, \dotsc, M^{\setT}_\lambda, \mat\beta_j \in \minimaset{\setT}
\right\}.
\end{equation}
Hence the $q$-th element in $\maminimaset{\setF_k}$ is the minimum from CV fold $\setF_k$ most similar to the $q$-th element in $\minimaset{\setT}$.
There can be duplicates in $\maminimaset{\setF_k}$.

Matching minima from the complete training data to their closest counterparts in each CV fold allows us to more reliably estimate the prediction accuracy of each minimum than using merely the ordering of the minima based on their value of the objective function.
With a non-convex loss function the minimum with lowest objective value in $\minimaset{\setT}$ may capture a very different signal than the minimum with lowest objective value in $\minimaset{\setF_k}$.
In contrast, our strategy matches each minimum in $\minimaset{\setT}$ with a minimum in $\minimaset{\setF_k}$ that best agrees on the outlyingness of the observations in CV fold $\setF_k$.
While we cannot guarantee that they actually represent the same signal, the matching is likely more informative than the order of the minima, which is also supported by our empirical results below.

The proposed weight-based similarity has several advantages over distances between the coefficient estimates or residuals.
First, weight-similarity is dimensionless and does not depend on the number of covariates, their covariance structure or the scale of the response.
Second, observations that are deemed outliers by both estimates do not affect the weight-similarity, whereas measures directly utilizing the residuals can be arbitrarily affected by outliers.
Third, outliers are often the drivers behind local minima; minima agreeing on the outlyingness of observations thus likely describe a comparable signal.

Once the CV-surrogates from each CV fold $k$ and penalty parameter $\l \in \mathcal L$ are determined, RIS-CV utilizes the weights~\eqref{eqn:method-weights} to quantify the prediction accuracy of every minima $\hmat\beta_q \in \minimaset{\setT}$.
The prediction accuracy is estimated by a weighted standard deviation of the CV prediction errors, weighted by the outlyingness of each observation as estimated on the complete training data:
\begin{equation}\label{eqn:method-weighted-rmspe}
    \hat E_{\lambda,q} = \sqrt{\frac{1}{ \sum_{i \in \setT} w_i(\hmat\beta_q)} \sum_{k = 1}^K 
        \sum_{i \in \mathcal T\setminus \setF_k} w_i(\hmat\beta_q) \left(y_i - \mat x_i\tr \hmat\beta^k_q \right)^2},
    \quad q = 1, \dotsc, |\minimaset{\mathcal T}|,
\end{equation}
where $\hmat\beta^k_q$ is the $q$-th minimum in $\maminimaset{\setF_k}$.
This effectively ignores the prediction error of observations that are deemed outliers by that particular $\hmat\beta_q$.
Outliers cannot be expected to be predicted well by the model, hence the prediction errors for outliers should not affect the overall assessment of an estimate's prediction accuracy.
Moreover, by the definition of the M-scale~\eqref{eqn:mscale}, an S-estimator cannot assign a zero weight to more than $\lfloor \delta n \rfloor$ observations.
Therefore, as long as there are less than $\lfloor \delta n \rfloor$ outliers in the training data, even if $\hmat\beta_q$ describes an illicit signal driven by these outliers, many non-outlying observations will have a weight greater than 0 and hence the prediction accuracy $\hat E_{\lambda,q}$ will reflect the poor prediction of non-outliers.
It will thus show up as a minimum with poor prediction accuracy.

Compared to the usual robustification of N-CV through robust measures of the prediction error, our approach connects the estimated prediction error more closely to the estimated outlyingness, reducing the risk of misrepresenting an estimate's prediction accuracy.
While robust measures such as MAPE and $\tau$-size guard against the effects of arbitrarily large prediction errors for a proportion of prediction errors, these measures do not discriminate whether the large prediction error comes from an observation that is determined to be an outlier or not.
Therefore, observations that may have a non-zero influence on the estimated coefficients (inliers) are allowed to have very large prediction error.
This disconnect between the outlyingness of observations and their effect on the estimated prediction error can lead to high variance in the estimated prediction accuracy.

Just like with N-CV, the prediction accuracy estimated by RIS-CV depends on the random CV splits $\setF_k$, $k=1,\dotsc,K$, and hence is a stochastic quantity.
RIS-CV should therefore also be repeated several times to assess the variability of the estimate, but the number of replications is usually much smaller than the number of replications needed for N-CV.
Moreover, since the starting points are the same in all replications and CV folds, RIS-CV can be computed much faster than N-CV.
We generally suggest to repeat RIS-CV 5--20 times, depending on the complexity of the problem.
More complex problems may lead to higher variability in the estimated prediction accuracy and require more RIS-CV replications.
The complete RIS-CV strategy with replications is detailed in Algorithm~\ref{alg:rc-procedure}.

For faster computations by leveraging the smoothness of the many different penalization paths, the \pense\ package performs step 1 in Algorithm~\ref{alg:rc-procedure} for all $\l \in \mathcal L$ before the replicated RIS-CV (steps 2--10) is applied.
The RIS-CV procedure yields estimates of the prediction accuracy for up to $M$ minima at each level of the penalty parameter, leading to several ways the ``optimal'' penalty parameter can be selected.
We simply select the minimum with best prediction accuracy for each $\lambda \in \mathcal L$, $\hat q_\lambda = \argmin_{q = 1, \dotsc, |\minimaset{\setT}|} \hat E_{\lambda,q}$.
These estimates can be used to build the RIS-CV curve which, alongside the associated standard errors, can be used to judge the model's suitability for the problem at hand and to select the optimal penalty parameter.
The standard errors could also be used to choose $\hat q_\lambda$ but the numerical experiments below do not suggest an improvement over the simpler strategy applied here.
The \pense\ package returns the estimated prediction accuracy and standard errors for all minima and hence allows the user to utilize more sophisticated strategies if desired.

\begin{algorithm}[ht]
\caption{Robust Information Sharing CV}
\label{alg:rc-procedure}
\begin{algorithmic}[1]
	\Input Standardized data set $\{(y_1, \mat x_1), \dotsc, (y_n, \mat x_n)\}$,
        fixed hyper-parameter $\lambda$,
        the number of folds $K$,
        the maximum number of minima retained $M$ and
        the number of cross-validation replications $R$.
	\State Compute up to $M$ unique minima using all observations $\setT = \{1, \dotsc, n \}$.
    This set of minima is denoted by $\minimaset{\mathcal T}$.
	\For{$r = 1, \dotsc, R$}
		\State Split the data into $K$ cross-validation folds, denoted by $\setF_1, \dotsc, \setF_K$, such that $\setF_k \cap \setF_{k'} = \emptyset$, for $k\neq k'$, and $\bigcup_{k=1}^K \setF_k = \setT$.
		\For {$k = 1, \dotsc, K$}
            \State Compute up to $M$ unique minima using the observations in $\setT \setminus \setF_k$, denoted by $\minimaset{\setF_k}$.
            The minima are computed by using $\minimaset{\mathcal T}$ as starting points for the non-convex optimization.
            \State From $\minimaset{\setF_k}$ determine the CV-surrogates $\maminimaset{\setF_k}$ according to~\eqref{eqn:method-surrogate-set}.
        \EndFor
        \State Estimate the robust weighted RMSPE for each minimum $\hmat\beta_q \in \minimaset{\mathcal T}$, $q = 1, \dotsc, |\minimaset{\mathcal T}|$ using~\eqref{eqn:method-weighted-rmspe}, denoted by $\hat E_{\lambda,q}^{(r)}$.
    \EndFor
    \State Compute the average robust weighted RMSPE and its standard error
    \begin{align*}
    \hat E_{\lambda,q} &= \frac{1}{R} \sum_{r=1}^R \hat E_{\lambda,q}^{(r)}, \\
    \widehat{\text{SD}}_{\lambda,q} &= \sqrt{\frac{1}{R-1} \sum_{r=1}^R (\hat E_{\lambda,q}^{(r)} - \hat E_{\lambda,q})^2}.
    \end{align*}
\end{algorithmic}
\end{algorithm}

\section{Empirical Studies}\label{sec:empirical-studies}

We demonstrate the advantages of RIS-CV over naïve CV in a real-world application and a simulation study.
Additional real-world applications and empirical results are presented in the supplementary materials.

\subsection{Real-World Application: Cardiac Allograft Vasculopathy}\label{sec:appl-cav}

In this real-world application we utilize the adaptive EN S-estimator (adaPENSE) to build a biomarker for cardiac allograft vasculopathy (CAV) from protein expression levels.
CAV, a common and often life-threatening complication after receiving a cardiac transplant, is characterized by the narrowing of vessels that supply oxygenated blood to the heart.
The usual clinical biomarker for CAV is the percent diameter stenosis of the left anterior descending artery.
The data is obtained from \textcite{kepplinger_robust_2023}, which use a synthetic replicate of the restricted original data in \textcite{cohen_freue_robust_2019}.
The goal in this application is to fit a linear regression model to the stenosis of the artery using expression levels of 81 protein groups from a total of $N=37$ patients.
The adaptive PENSE estimator is tuned to a breakdown point of 20\% and uses an EN penalty with $\alpha=0.75$.
For RIS-CV we retain up to 40 local solutions.

Figure~\ref{fig:appl-cav-ind} shows the CV curves (left panel) and the respective first-order changes (right panel) from two independent replications of 10-fold CV.
The gray curves show the $\tau$-size of the prediction error for varying penalty levels estimated by N-CV.
The blue lines, in contrast, show the weighted RMSPE estimated by RIS-CV.
It is clear that RIS-CV leads to more consistent results across the two replicates, and that the CV curves from RIS-CV are smoother than those estimated by N-CV.

When averaging five replications the differences between naïve CV and RIS-CV are even more obvious.
Figure~\ref{fig:appl-cav-cv}a shows the estimated prediction errors for adaptive PENSE using N-CV (left) and RIS-CV (right).
With N-CV it is difficult to identify an appropriate penalization level.
Values around $\lambda \approx 0.01$ and $\lambda \approx 0.06$ seem to give similar prediction accuracy, but the estimated prediction errors have high variance and the selected models for these respective penalization levels are quite different (18 vs.\ 7 non-zero coefficients).
RIS-CV, on the other hand, identifies a tight range of penalization levels around $\lambda \approx 0.02$ that seem to yield the best prediction accuracy in this data set.
The smoothness of the RIS-CV curve is clearly advantageous in identifying a good penalization level in this application.

We further analyze the smoothness of the solution path for the minima selected by N-CV and RIS-CV.
N-CV always selects the global minimum at each penalization level, while RIS-CV selects the minimum with the best estimated prediction accuracy.
Figure~\ref{fig:appl-cav-cv}b shows the $L_1$ norm of the coefficients on the solution path for N-CV and RIS-CV.
Due to the flexibility of selecting a non-global minimum, RIS-CV leads to a smoother solution path in terms of the $L_1$ norm of the coefficients.
In this example, N-CV exhibits two moderate discontinuities along the solution path, which could be one of the reasons why N-CV gives ambiguous results in the left panel of Figure~\ref{fig:appl-cav-cv}a.

In addition to the smoother CV curve and penalization path, RIS-CV is about 25\% faster to compute.
Particularly in applications with many local minima, locating starting points for the non-convex optimization of the adaptive PENSE from scratch can take considerable time.
By using only the local minima from the fit to the full data, RIS-CV drastically reduces the computational burden.

\begin{figure}[t]
  \centering
  \includegraphics[width=1\linewidth]{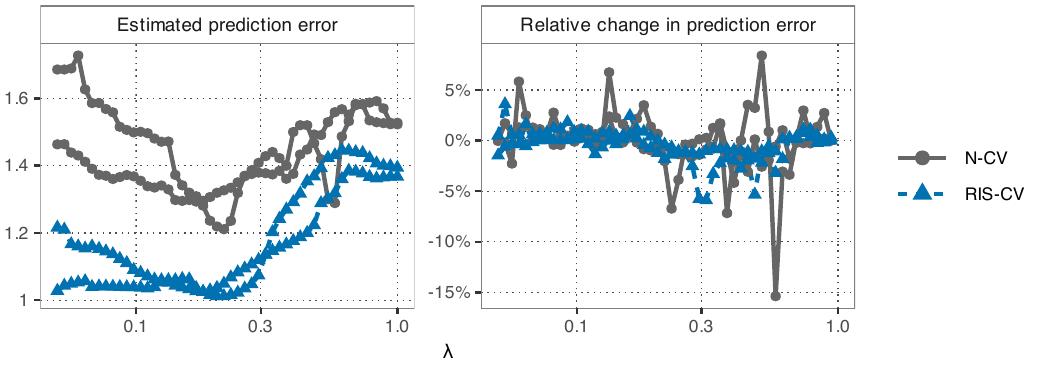}
  \caption{%
  Estimated prediction errors (left) and the respective changes (right) from two independent 10-fold CV runs for adaPENSE applied to the CAV data set.
  The blue curves represent the estimated weighted RMSPE from RIS-CV, while the gray curves show the $\tau$-size of the prediction errors estimated by N-CV.%
  }
  \label{fig:appl-cav-ind}
\end{figure}

\begin{figure}[t]
  \centering
  \includegraphics[width=1\linewidth]{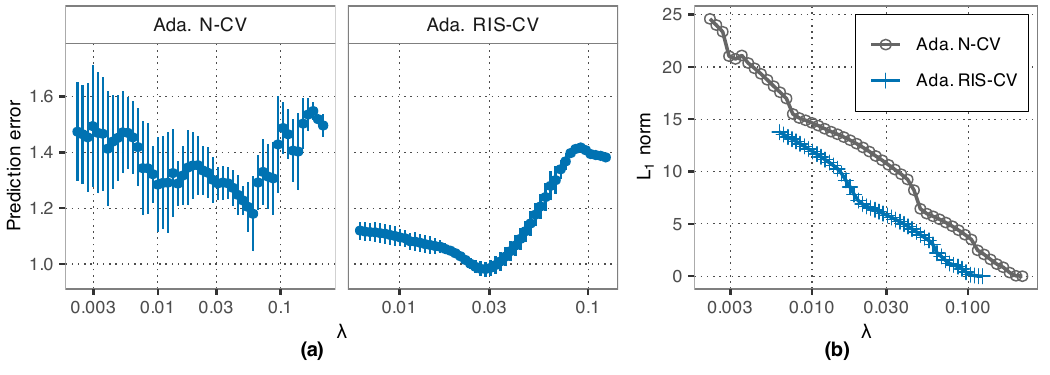}
  \caption{%
  Results for the CAV data set:
  (a) CV curves from 5 replications of 10-fold CV and
  (b) the $L_1$ norm (excluding the intercept) of the selected minimum of the adaPENSE objective function versus the penalization level ($\lambda$).
  In (a) the left panel (``Ada. N-CV'') shows the $\tau$-size of the prediction error of adaPENSE estimated by N-CV while the right panel (``Ada. RIS-CV'') shows the estimated weighted RMSPE of adaPENSE estimated by RIS-CV.%
  }
  \label{fig:appl-cav-cv}
\end{figure}

In Section~\ref{sec:suppl-additional-empirical-results} of the supplementary materials we present two additional real-world applications with similar conclusions as in the CAV study.
In these additional applications we can further compute the true prediction error on an independent test set.
The effects of the local minima in these applications are again noticeable but less pronounced than in the CAV study.
Nevertheless, RIS-CV leads to smoother CV curves and penalization paths, as well as approximately 3\% improvement in prediction accuracy and computational speed.

\subsection{Simulation Study}

In the real-world applications we demonstrate that RIS-CV leads to smoother CV curves and in turn to better selection of the penalty parameters.
We further show that the minimum selected by RIS-CV, which is not necessarily a global minimum, can lead to better out-of-sample prediction.
To underscore that these advantages also hold in other data configurations and data generating processes (DGP) we present the results from a simulation study.
Throughout this study we compare RIS-CV with N-CV for PENSE.
We consider a DGP similar to \textcite{kepplinger_robust_2023-1}.
The data is generated according to the model
\begin{equation}\label{eqn:sim-true-model}
y_i = \mat x_i\tr \mat\beta_0 + \varepsilon_i, \quad i = 1,\dotsc,n,
\end{equation}
where $\mat x_i$ is the $p$-dimensional covariate vector, $\mat\beta_0 = (1, \dotsc, 1, 0, \dotsc 0)\tr$ is the true coefficient vector with the first $s = \lfloor \log(n) \rfloor$ entries equal to 1 and the others are all 0.
The covariates $\mat x_i$ follow a multivariate t distribution with four degrees of freedom and AR(1) correlation structure, $\Cor(X_j, X_{j'}) = 0.5^{|j - j'|}$, $j,j' = 1, \dotsc, p$.
The i.i.d.\ errors, $\varepsilon_i$, follow a symmetric distribution $F$ with a scale chosen such that $\mat\beta_0$ explains about 50\% of the variation in $y_i$ (i.e., $\text{SNR} \approx 1$).
For this, the empirical variance in $\mat\varepsilon$ is measured by the empirical standard deviation if $F$ is Gaussian and by the $\tau$-size for other error distributions.

We consider different scenarios for the number of observations, $n \in (100, 200)$, the number of available predictors, $p \in (50, 100, 200)$, and error distribution $F$ (Gaussian, Laplace, Symmetric Stable with stability parameter $\alpha=1.5$).
Good leverage points are introduced multiplying the $(p-s)/2$ largest covariate values for 20\% of observations by 8.
These values are introduced in covariates with a true coefficient of 0 and hence should not affect the estimators.
Furthermore, 30\% of observations come from a different model with three distinct contamination signals.
We would expect that the PENSE objective function has at least one local minimum close to each of them.
The contamination signals all follow model~\ref{eqn:sim-true-model} but with a different $\mat\beta_0$ and further introducing leverage points in the truly relevant covariates.
The details are described in the Supplementary Materials Section~\ref{sec:suppl-sim-cont-dgp}.

For each of the 18 settings we repeat the simulation 50 times and compare the prediction performance of the solutions/penalty levels selected by N-CV and RIS-CV with $K=7$ folds and $R=5$ replications.
We select the penalization level such that the solution at this penalization level is within one standard error of the solution with smallest estimated prediction error (the 1-SE-rule).
As before, 40 solutions are retained for RIS-CV, and PENSE is tuned to a breakdown point of 40\% and $\alpha=0.5$.

Figure~\ref{fig:sim-pred-accuracy} summarizes the simulation results in terms of the prediction accuracy.
Here we show the difference in the true prediction error achieved with the solution chosen by RIS-CV and the solution chosen by N-CV.
The difference is scaled by the true scale of the error distribution, $F$, which can be different in each simulation run.
In the majority of simulation runs, RIS-CV selects solutions with smaller prediction error than N-CV.
While for Gaussian errors the gain from RIS-CV is sometimes negligible, for heavy-tailed errors distributions the average gains are more substantial.
Even in instances where the gains are not as pronounced, the smoother CV curves from RIS-CV nevertheless provide better guidance to the practitioner as to what penalization level should be chosen.
Section~\ref{sec:supp-additional-sim} of the supplementary materials includes additional results from the simulation study.

\begin{figure}[t]
  \centering
  \includegraphics[width=1\linewidth]{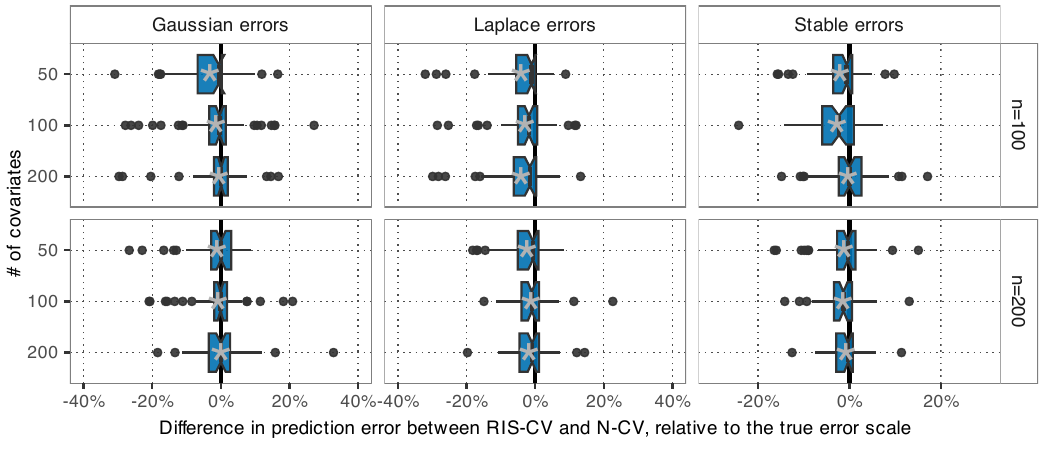}
  \caption{%
  Difference in prediction error between RIS-CV and N-CV.
  Penalty parameters are chosen by the 1-SE rule.
  The gray asterisks depict the mean.
  Negative differences mean the solution chosen by RIS-CV leads to better prediction accuracy than N-CV.%
  }
  \label{fig:sim-pred-accuracy}
\end{figure}

\section{Conclusion}

Cross-validation is the prevalent data-driven method to select models with penalized estimators.
We demonstrate that standard CV (N-CV) suffers from severe instability when applied to non-convex robust penalized regression estimators.
We demonstrate with theory and empirical results how the large number of local minima, caused for instance by outliers and contamination, can lead to highly non-smooth penalization paths and CV curves.
This non-smoothness can in turn lead to low-quality estimates of the prediction accuracy and thus an ill-guided selection of the penalization level.
These issues are typically more pronounced as the complexity of the data analysis task increases.
This inherent instability of N-CV has been a major obstacle for the utility and adoption of robust penalized estimators.

In this paper we therefore propose a novel strategy, RIS-CV, where we retain all local minima uncovered by the numerical optimization routine and share outlier information between these minima on the full data set with the individual cross-validation folds.
Our results show that by leveraging the robustness weights associated with each local minimum we can (a)~determine which minima in the CV folds correspond most closely with the minima on the full data set and (b)~estimate the prediction accuracy of all of those local minima, not only the global minimum.
This allows us to select the minimum with the best prediction accuracy, which is not necessarily the global minimum, yielding a smoother CV curve and penalization path.
RIS-CV further speeds up computations as the search space in the CV folds can be restricted to neighborhoods around the minima on the full data set.

The proposed matching scheme currently does not differentiate between good and poor matches.
The most similar CV solution is chosen as surrogate, irrespective of the actual similarity.
Since the metric is a unit-less correlation coefficient, thresholding rules could be developed in the future to avoid using unrelated minima in RIS-CV.
For example, one could require CV surrogates to have a similarity of at least 0.75.
Further research would be necessary, however, to devise appropriate strategies to handle situations where some CV folds do not yield a CV surrogate, and how to properly choose the threshold.

RIS-CV is a much-needed tool to improve the practicality, utility and acceptance of robust penalized estimators.
Our numerical studies reveal that RIS-CV leads to smoother CV curves and more reliable selection of the penalty parameter and, at the same time, the most suitable minimum of the objective function at the chosen penalization level.
We show that the improved smoothness and identification of useful minima leads to better out-of-sample prediction accuracy in a large-scale simulation study and in the real-world applications.
RIS-CV is thus improving the reliability of the robust model selection process and thereby instilling more trustworthiness in the results.

\printbibliography

@book{hampel_robust_1986,
	location = {New York},
	title = {Robust statistics: the approach based on influence functions},
	isbn = {978-1-283-33237-8},
	series = {Wiley series in probability and statistics},
	shorttitle = {Robust statistics},
	pagetotal = {538},
	publisher = {Wiley},
	author = {Hampel, Frank R. and Ronchetti, Elvezio M. and Rousseeuw, Peter J. and Stahel, Werner A.},
	date = {1986},
	keywords = {Robust statistics},
}

@ARTICLE{nierenberg_determinants_1989,
  AUTHOR = {Nierenberg, David W. and Stukel, Therese A. and Baron, John A. and Dain, Bradley J. and Greenberg, E. Robert and {The Skin Cancer Prevention Study Group}},
  DATE = {1989-09-01},
  DOI = {10.1093/oxfordjournals.aje.a115365},
  ISSN = {0002-9262},
  JOURNALTITLE = {American Journal of Epidemiology},
  NUMBER = {3},
  PAGES = {511--521},
  SHORTJOURNAL = {American Journal of Epidemiology},
  TITLE = {Determinants of Plasma Levels of Beta-Carotene and Retinol},
  URLDATE = {2024-08-07},
  VOLUME = {130},
}

@ARTICLE{pfister_stabilizing_2021,
  AUTHOR = {Pfister, Niklas and Williams, Evan G. and Peters, Jonas and Aebersold, Ruedi and Bühlmann, Peter},
  DATE = {2021-09},
  DOI = {10.1214/21-AOAS1487},
  ISSN = {1932-6157, 1941-7330},
  JOURNALTITLE = {The Annals of Applied Statistics},
  KEYWORDS = {Variable selection,causality,multiomic data,regression},
  NUMBER = {3},
  PAGES = {1220--1246},
  TITLE = {Stabilizing variable selection and regression},
  URLDATE = {2024-08-07},
  VOLUME = {15},
}

@BOOK{alfons_robusthd_2016,
  AUTHOR = {Alfons, Andreas},
  URL = {https://CRAN.R-project.org/package=robustHD},
  DATE = {2016},
  KEYWORDS = {Robust Other},
  TITLE = {{robustHD}: Robust Methods for High-Dimensional Data},
}

@ARTICLE{alfons_sparse_2013,
  AUTHOR = {Alfons, Andreas and Croux, Christophe and Gelper, Sarah},
  DATE = {2013},
  DOI = {10.1214/12-AOAS575},
  ISSN = {1932-6157},
  JOURNALTITLE = {The Annals of Applied Statistics},
  NUMBER = {1},
  PAGES = {226--248},
  TITLE = {Sparse {Least} {Trimmed} {Squares} {Regression} for {Analyzing} {High}-{Dimensional} {Large} {Data} {Sets}},
  URLDATE = {2022-09-07},
  VOLUME = {7},
}

@ARTICLE{amato_penalised_2021,
  AUTHOR = {Amato, Umberto and Antoniadis, Anestis and De Feis, Italia and Gijbels, Irene},
  DATE = {2021},
  DOI = {10.1007/s10260-020-00511-z},
  JOURNALTITLE = {Statistical Methods \& Applications},
  NUMBER = {1},
  PAGES = {1--48},
  TITLE = {Penalised robust estimators for sparse and high-dimensional linear models},
  VOLUME = {30},
}

@ARTICLE{austern_asymptotics_2020,
  AUTHOR = {Austern, Morgane and Zhou, Wenda},
  URL = {https://arxiv.org/abs/2001.11111v2},
  DATE = {2020-01-29},
  DOI = {10.48550/arXiv.2001.11111},
  JOURNALTITLE = {{arXiv}.org},
  TITLE = {Asymptotics of Cross-Validation},
  URLDATE = {2024-09-06},
}

@ARTICLE{bates_cross-validation_2024,
  AUTHOR = {Bates, Stephen and Hastie, Trevor and Tibshirani, Robert},
  DATE = {2024-04-02},
  DOI = {10.1080/01621459.2023.2197686},
  ISSN = {0162-1459},
  JOURNALTITLE = {Journal of the American Statistical Association},
  KEYWORDS = {Bootstrap/resampling,Computationally intensive methods,Cross-validation,Goodness-of-fit methods},
  NUMBER = {546},
  PAGES = {1434--1445},
  TITLE = {Cross-Validation: What Does It Estimate and How Well Does It Do It?},
  URLDATE = {2024-09-06},
  VOLUME = {119},
}

@ARTICLE{cohen_freue_robust_2019,
  AUTHOR = {Cohen Freue, Gabriela V. and Kepplinger, David and Salibián-Barrera, Matías and Smucler, Ezequiel},
  DATE = {2019},
  DOI = {10.1214/19-AOAS1269},
  JOURNALTITLE = {Annals of Applied Statistics},
  KEYWORDS = {{EN},{LASSO},S-estimator},
  NUMBER = {4},
  PAGES = {2065--2090},
  TITLE = {Robust Elastic Net Estimators for Variable Selection and Identification of Proteomic Biomarkers},
  VOLUME = {13},
}

@ARTICLE{datta_note_2019,
  AUTHOR = {Datta, Abhirup and Zou, Hui},
  DATE = {2019-10-28},
  DOI = {10.1080/00401706.2019.1668856},
  ISSN = {0040-1706},
  JOURNALTITLE = {Technometrics},
  KEYWORDS = {Cross-validation,Inconsistency,Lasso,Measurement errors},
  NUMBER = {4},
  PAGES = {549--556},
  TITLE = {A Note on Cross-Validation for Lasso Under Measurement Errors},
  URLDATE = {2024-09-16},
  VOLUME = {62},
}

@ARTICLE{dicker_variance_2014,
  AUTHOR = {Dicker, Lee H.},
  DATE = {2014-03},
  DOI = {10.1093/biomet/ast065},
  JOURNALTITLE = {Biometrika},
  KEYWORDS = {Scale Estimation},
  NUMBER = {2},
  PAGES = {269--284},
  TITLE = {Variance estimation in high-dimensional linear models},
  VOLUME = {101},
}

@ARTICLE{fan_variance_2012,
  AUTHOR = {Fan, Jianqing and Guo, Shaojun and Hao, Ning},
  DATE = {2012},
  DOI = {10.1111/j.1467-9868.2011.01005.x},
  JOURNALTITLE = {Journal of the Royal Statistical Society: Series B (Statistical Methodology)},
  KEYWORDS = {Scale Estimation},
  NUMBER = {1},
  PAGES = {37--65},
  TITLE = {Variance estimation using refitted cross-validation in ultrahigh dimensional regression},
  VOLUME = {74},
}

@ARTICLE{filzmoser_robust_2021,
  AUTHOR = {Filzmoser, Peter and Nordhausen, Klaus},
  DATE = {2021},
  DOI = {10.1002/wics.1524},
  ISSN = {1939-0068},
  JOURNALTITLE = {{WIREs} Computational Statistics},
  KEYWORDS = {Outlier,dimension reduction,high-dimensional data,regression,sparsity},
  LANGID = {english},
  NUMBER = {4},
  PAGES = {e1524},
  TITLE = {Robust linear regression for high-dimensional data: An overview},
  URLDATE = {2024-08-22},
  VOLUME = {13},
}

@ARTICLE{kepplinger_robust_2023-1,
  AUTHOR = {Kepplinger, David},
  DATE = {2023-07-01},
  DOI = {10.1016/j.csda.2023.107730},
  ISSN = {0167-9473},
  JOURNALTITLE = {Computational Statistics \& Data Analysis},
  KEYWORDS = {Algorithms,High-dimensional statistics,Outlier resistance,Regularized regression,Robust procedures},
  PAGES = {107730},
  SHORTJOURNAL = {Computational Statistics \& Data Analysis},
  TITLE = {Robust variable selection and estimation via adaptive elastic net S-estimators for linear regression},
  URLDATE = {2023-12-05},
  VOLUME = {183},
}

@INBOOK{kepplinger_robust_2023,
  AUTHOR = {Kepplinger, David and Cohen Freue, Gabriela V.},
  EDITOR = {Burger, Thomas},
  LOCATION = {New York, {NY}},
  PUBLISHER = {Springer {US}},
  BOOKTITLE = {Statistical Analysis of Proteomic Data: Methods and Tools},
  DATE = {2023},
  DOI = {10.1007/978-1-0716-1967-4_14},
  ISBN = {978-1-07-161967-4},
  LANGID = {english},
  PAGES = {315--331},
  TITLE = {Robust Prediction and Protein Selection with Adaptive {PENSE}},
  URLDATE = {2024-08-07},
}

@ARTICLE{khan_robust_2007,
  AUTHOR = {Khan, Jafar A and Van Aelst, Stefan and Zamar, Ruben H},
  DATE = {2007-12},
  DOI = {10.1198/016214507000000950},
  ISSN = {0162-1459},
  JOURNALTITLE = {Journal of the American Statistical Association},
  KEYWORDS = {Bootstrap,Computational complexity,Robust prediction,Stepwise algorithm,Winsorization},
  NUMBER = {480},
  PAGES = {1289--1299},
  TITLE = {Robust {Linear} {Model} {Selection} {Based} on {Least} {Angle} {Regression}},
  URLDATE = {2022-09-07},
  VOLUME = {102},
}

@ARTICLE{koller_sharpening_2011,
  AUTHOR = {Koller, Manuel and Stahel, Werner A.},
  DATE = {2011},
  DOI = {10.1016/j.csda.2011.02.014},
  JOURNALTITLE = {Computational Statistics \& Data Analysis},
  KEYWORDS = {Robust Other,S-estimator},
  NUMBER = {8},
  PAGES = {2504--2515},
  TITLE = {Sharpening Wald-type inference in robust regression for small samples},
  VOLUME = {55},
}

@ARTICLE{li_asymptotics_2023,
  AUTHOR = {Li, Jessie},
  DATE = {2023-11-14},
  DOI = {10.1613/jair.1.13974},
  ISSN = {1076-9757},
  JOURNALTITLE = {Journal of Artificial Intelligence Research},
  LANGID = {english},
  PAGES = {491--526},
  TITLE = {Asymptotics of K-Fold Cross Validation},
  URLDATE = {2024-09-06},
  VOLUME = {78},
}

@ARTICLE{loh_scale_2021,
  AUTHOR = {Loh, Po-Ling},
  DATE = {2021-01},
  DOI = {10.1214/21-EJS1936},
  ISSN = {1935-7524, 1935-7524},
  JOURNALTITLE = {Electronic Journal of Statistics},
  NUMBER = {2},
  PAGES = {5933--5994},
  TITLE = {Scale calibration for high-dimensional robust regression},
  URLDATE = {2024-08-22},
  VOLUME = {15},
}

@ARTICLE{maronna_robust_2011,
  AUTHOR = {Maronna, Ricardo A.},
  DATE = {2011},
  DOI = {10.1198/TECH.2010.09114},
  JOURNALTITLE = {Technometrics},
  KEYWORDS = {M-estimator,{MM}-estimator,Ridge,S-estimator},
  NUMBER = {1},
  PAGES = {44--53},
  TITLE = {Robust Ridge Regression for High-Dimensional Data},
  VOLUME = {53},
}

@ARTICLE{maronna_correcting_2010,
  AUTHOR = {Maronna, Ricardo A. and Yohai, Victor J.},
  DATE = {2010-12-01},
  DOI = {10.1016/j.csda.2009.09.015},
  ISSN = {0167-9473},
  JOURNALTITLE = {Computational Statistics \& Data Analysis},
  NUMBER = {12},
  PAGES = {3168--3173},
  SHORTJOURNAL = {Computational Statistics \& Data Analysis},
  TITLE = {Correcting {MM} estimates for “fat” data sets},
  URLDATE = {2024-08-30},
  VOLUME = {54},
}

@ARTICLE{monti_sparse_2021,
  AUTHOR = {Monti, Gianna Serafina and Filzmoser, Peter},
  DATE = {2021-11-05},
  DOI = {10.1093/bioinformatics/btab572},
  ISSN = {1367-4803},
  JOURNALTITLE = {Bioinformatics},
  NUMBER = {21},
  PAGES = {3805--3814},
  SHORTJOURNAL = {Bioinformatics},
  TITLE = {Sparse least trimmed squares regression with compositional covariates for high-dimensional data},
  URLDATE = {2024-08-22},
  VOLUME = {37},
}

@ARTICLE{reid_study_2016,
  AUTHOR = {Reid, Stephen and Tibshirani, Robert and Friedman, Jerome},
  DATE = {2016},
  JOURNALTITLE = {Statistica Sinica},
  KEYWORDS = {Scale Estimation},
  PAGES = {35--67},
  TITLE = {A Study of Error Variance Estimation in Lasso Regression},
  VOLUME = {26},
}

@ARTICLE{ronchetti_robust_1997,
  AUTHOR = {Ronchetti, Elvezio and Field, Christopher and Blanchard, Wade},
  DATE = {1997-09-01},
  DOI = {10.1080/01621459.1997.10474057},
  ISSN = {0162-1459},
  JOURNALTITLE = {Journal of the American Statistical Association},
  KEYWORDS = {Bounded influence,Construction sample,Outliers,Prediction error,Robust prediction,Validation sample},
  NUMBER = {439},
  PAGES = {1017--1023},
  TITLE = {Robust Linear Model Selection by Cross-Validation},
  URLDATE = {2024-08-22},
  VOLUME = {92},
}

@ARTICLE{she_gaining_2021,
  AUTHOR = {She, Yiyuan and Wang, Zhifeng and Shen, Jiahui},
  DATE = {2021},
  DOI = {10.1080/01621459.2020.1850460},
  JOURNALTITLE = {Journal of the American Statistical Association},
  NUMBER = {539},
  PAGES = {1282--1295},
  TITLE = {Gaining Outlier Resistance With Progressive Quantiles: Fast Algorithms and Theoretical Studies},
  VOLUME = {117},
}

@ARTICLE{smucler_robust_2017,
  AUTHOR = {Smucler, Ezequiel and Yohai, Victor J.},
  DATE = {2017},
  DOI = {10.1016/j.csda.2017.02.002},
  ISSUE = {C},
  JOURNALTITLE = {Computational Statistics \& Data Analysis},
  KEYWORDS = {{LASSO},M-estimator,{MM}-estimator,bridge,regularized},
  PAGES = {116--130},
  TITLE = {Robust and sparse estimators for linear regression models},
  VOLUME = {111},
}

@ARTICLE{sun_adaptive_2019,
  AUTHOR = {Sun, Qiang and Zhou, Wen-Xin and Fan, Jianqing},
  DATE = {2019},
  DOI = {10.1080/01621459.2018.1543124},
  JOURNALTITLE = {Journal of the American Statistical Association},
  KEYWORDS = {{LASSO},M-estimator},
  NUMBER = {529},
  PAGES = {254--265},
  TITLE = {Adaptive Huber Regression},
  VOLUME = {115},
}

@ARTICLE{yohai_high_1988,
  AUTHOR = {Yohai, Victor J. and Zamar, Ruben H.},
  DATE = {1988},
  DOI = {10.1080/01621459.1988.10478611},
  JOURNALTITLE = {Journal of the American Statistical Association},
  KEYWORDS = {tau-estimator},
  NUMBER = {402},
  PAGES = {406--413},
  TITLE = {High Breakdown-Point Estimates of Regression by Means of the Minimization of an Efficient Scale},
  VOLUME = {83},
}

\section{Supplementary Material}

\subsection{Failures of Naïve Cross-Validation}
\subsubsection{Non-smooth path of global minima}

Here we demonstrate that the chances of the global minimum ``jumping'' between local minima when the penalization level changes is non-negligible.
The following proposition shows that if there are two local minima (the ``good'' and the ``bad'' minimum), with the bad minimum being much closer to the origin than the good minimum, the bad minimum will take over from the good minimum as the global minimum when the penalization level is increased.
Under the stated conditions, the proposition is entirely deterministic.
We will show later that there are indeed situations in which the conditions for the proposition are satisfied with non-zero probability.

\begin{proposition}\label{prop:suppl-nonsmooth}
Consider a bounded robust loss function $\rho$ with $\psi(x) = \rho'(x) = 0$ for $|x| > c_1 > 0$ and $\psi'(x) \geq 0$ for $|x| < c_2 < c_1$ and $c_2 > 0$.
Assume that $\bhc$ and $\bhs$ are the only two minima of the objective function at a penalization level $\l$, that $\bhc > 0$, $\bhs - \bhc \gg 2 c_1$ and $\loss(\bhc; \l) = \loss(\bhs; \l)$.
Assume further that there exists a subset of the $n$ observations, $\mathcal C \subset {1, \dotsc, n}$, such that $|y_i - \bhc x_i| < c_1$ for all $i \in \mathcal C$, $|y_i - \bhs x_i| < c_1$ for all $i \notin \mathcal C$, and that the cardinality of the set ${i \colon x_i = 0}$ is less than $b n$.
This implies that $\bhc$ is determined only by observations in $\mathcal C$ and $\bhs$ is determined only by $i \notin \mathcal C$.

Then, for any $\delta$ with $|\delta|$ small enough, $\btc = \bhc - \delta/S_c$ with $S_c = \frac{1}{n} \sumc \psi'(y_i - (\bhc + \nu_c) x_i) x_i^2$ and $\nu_c \in (0, \delta)$ is a minimum of the objective function for penalization level $\l + \delta$.
Similarly, with $S_\star = \frac{1}{n} \sums \psi'(y_i - (\bhs + \nu_\star) x_i) x_i^2$ and $\nu_\star \in (0, \delta)$, $\bts = \bhs - \delta/S_\star$ is another minimum of the objective function.
Furthermore,
\begin{align*}
\loss(\bhc + |\delta|/S_c; \l - |\delta|) &> \loss(\bhs + |\delta|/S_\star; \l - |\delta|), \text{ and}\\
\loss(\bhc - |\delta|/S_c; \l + |\delta|) &< \loss(\bhs - |\delta|/S_\star; \l + |\delta|).
\end{align*}

Therefore, 
$$
\lim_{\tilde\delta \to 0} | \hat\beta(\l - \tilde\delta) - \hat\beta(\l + \tilde\delta) | > 0.
$$
In other words, under the above assumptions the regularization path for parameter $\beta$ has a discontinuity at $\l$ and hence is non-smooth.
\end{proposition}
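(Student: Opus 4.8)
The plan is to establish the discontinuity in three stages. First, a local perturbation analysis shows that each of the two minima $\bhc$ and $\bhs$ deforms continuously into a unique nearby local minimum of the objective function at penalization level $\l' = \l + \delta$ for $|\delta|$ small. Second, an envelope-type computation identifies the rate at which the objective value along each of these two branches changes with $\l'$. Third, a crossing argument combining these facts with the hypothesis $\loss(\bhc;\l) = \loss(\bhs;\l)$ shows that the presumptive global minimum jumps between the two branches as $\l'$ passes through $\l$, which gives the stated limit.

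For the first stage, fix attention on $\bhc$ and write the univariate objective as $\loss(\beta;\l') = \frac{1}{n}\sum_{i=1}^n \rho(y_i - \beta x_i) + \l' P(\beta)$ with $P(\beta) = |\beta|$ (a general elastic-net $P$ is handled identically, with $P'$ in place of $1$ below and using that $P$ is strictly increasing on $(0,\infty)$). Because $\bhs - \bhc \gg 2c_1$, $|y_i - \bhc x_i| < c_1$ for $i \in \mathcal C$, and $|y_i - \bhs x_i| < c_1$ for $i \notin \mathcal C$, for every $\beta$ in a small neighbourhood of $\bhc$ the residuals $y_i - \beta x_i$ with $i \notin \mathcal C$ remain outside $[-c_1,c_1]$, so $\psi = \rho'$ and $\psi'$ vanish there; hence near $\bhc$ the objective equals, up to an additive constant, $\frac{1}{n}\sumc \rho(y_i - \beta x_i) + \l'\beta$ (using $\bhc > 0$). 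Writing $G_c(\beta) := \frac{1}{n}\sumc \psi(y_i - \beta x_i) x_i$, stationarity of $\bhc$ at level $\l$ reads $G_c(\bhc) = \l$, and we seek $\btc$ near $\bhc$ with $G_c(\btc) = \l + \delta$. Since $\bhc$ is a non-degenerate local minimum, $\loss''(\bhc) = \frac{1}{n}\sumc \psi'(y_i - \bhc x_i) x_i^2 = -G_c'(\bhc) > 0$, and this persists in a neighbourhood by continuity; thus $G_c$ is strictly monotone there, so by the intermediate value theorem such a $\btc$ exists, is unique, and is a strict local minimum, and by the mean value theorem applied to $G_c$ we get $\delta = G_c(\btc) - G_c(\bhc) = -S_c(\btc - \bhc)$ with $S_c = \frac{1}{n}\sumc \psi'(y_i - (\bhc + \nu_c)x_i) x_i^2$ for an intermediate point $\nu_c$, that is, $\btc = \bhc - \delta/S_c$. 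The hypotheses $\psi' \ge 0$ on $(-c_2,c_2)$ and the bound on the number of indices with $x_i = 0$ are exactly what make this strict local convexity ($S_c > 0$) tenable. The same argument applied to $\bhs$ gives $\bts = \bhs - \delta/S_\star$ with $S_\star = \frac{1}{n}\sums \psi'(y_i - (\bhs + \nu_\star)x_i) x_i^2 > 0$.

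For the second and third stages, let $v_c(\l') := \loss(\btc(\l');\l')$ and $v_\star(\l') := \loss(\bts(\l');\l')$ be the objective values along the two branches. Since $\btc(\l')$ is a critical point of $\loss(\cdot;\l')$, the envelope theorem gives $v_c'(\l') = \partial_{\l'}\loss(\btc(\l');\l') = P(\btc(\l')) = \btc(\l')$ (as $\btc(\l') > 0$ near $\bhc > 0$), so $v_c'(\l) = \bhc$; likewise $v_\star'(\l) = \bhs$. The decisive fact is $v_\star'(\l) = \bhs > \bhc = v_c'(\l)$, which holds because $\bhs - \bhc \gg 2c_1 > 0$ and $\bhc > 0$. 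Combined with $v_c(\l) = v_\star(\l)$, a first-order expansion in $\delta$ yields, for $|\delta|$ small, $v_c(\l - |\delta|) > v_\star(\l - |\delta|)$ and $v_c(\l + |\delta|) < v_\star(\l + |\delta|)$; since $v_c(\l - |\delta|) = \loss(\bhc + |\delta|/S_c;\l - |\delta|)$, $v_c(\l + |\delta|) = \loss(\bhc - |\delta|/S_c;\l + |\delta|)$ and analogously for the $\star$-branch, these are exactly the two displayed inequalities of the statement. Finally, because $\bhc$ and $\bhs$ are the only two minima at level $\l$, coercivity of $\loss(\cdot;\l')$ (the penalty dominates for large $|\beta|$, where $\psi$ vanishes) together with the persistence from the first stage implies that for all sufficiently small $\tilde\delta > 0$ the objective at $\l \pm \tilde\delta$ again has exactly the two local minima $\btc(\l \pm \tilde\delta)$ and $\bts(\l \pm \tilde\delta)$; hence the presumptive global minimum is $\hat\beta(\l - \tilde\delta) = \bts(\l - \tilde\delta) \to \bhs$ and $\hat\beta(\l + \tilde\delta) = \btc(\l + \tilde\delta) \to \bhc$ as $\tilde\delta \to 0$, so $\lim_{\tilde\delta \to 0} |\hat\beta(\l - \tilde\delta) - \hat\beta(\l + \tilde\delta)| = |\bhs - \bhc| \gg 2c_1 > 0$, which is the asserted discontinuity.

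I expect the main obstacle to be the first stage: rigorously establishing that $\bhc$ and $\bhs$ persist as strict local minima under a perturbation of $\l$, and that no spurious new minimum can appear for small $\tilde\delta$. This rests on the non-degeneracy $S_c, S_\star > 0$ — so that the intermediate-value and mean-value steps, and the explicit formulas $\btc = \bhc - \delta/S_c$, $\bts = \bhs - \delta/S_\star$, go through cleanly — and on a uniform lower bound on $|\loss'(\beta;\l)|$ for $\beta$ bounded away from $\bhc$ and $\bhs$, which propagates to $\l'$ near $\l$ because $\partial_{\l'}\loss'(\beta;\l') = P'(\beta)$ is bounded; the assumptions on $\psi'$, on the residual sets $\mathcal C$ and its complement, and on the number of indices with $x_i = 0$ are precisely what secure these properties. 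Once the first stage is in place, the envelope-theorem slope comparison $v_\star'(\l) = \bhs > \bhc = v_c'(\l)$ is the conceptual heart of the argument and the remaining steps are short.
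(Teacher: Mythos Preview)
Your proposal is correct and follows essentially the same route as the paper's proof: both first pin down the perturbed minima $\btc=\bhc-\delta/S_c$ and $\bts=\bhs-\delta/S_\star$ via the mean value theorem applied to the first-order condition, and then compare the objective values along the two branches to exhibit the crossing at $\l$. The only stylistic difference is that the paper obtains the leading term $\delta(\bhs-\bhc)$ by a direct second-order Taylor expansion of $\loss(\btc;\l+\delta)$ and $\loss(\bts;\l+\delta)$, whereas you invoke the envelope theorem to get $v_c'(\l)=\bhc$ and $v_\star'(\l)=\bhs$; these are the same computation, and your presentation arguably makes the mechanism (the penalty's $\l$-derivative equals $|\beta|$) more transparent. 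Your discussion of persistence and the exclusion of spurious minima is more explicit than the paper's, which simply takes for granted that the two perturbed minima remain the only ones.
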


\begin{proof}
The first step is to show that $\btc$ and $\bts$ are minima of $\loss(\beta; \l + \delta)$.
Since only $i \in \mathcal C$ have a non-zero derivative at $\bhc$ and $\bhs - \bhc \gg 2 c_1$ are far apart, the same is true for any small enough perturbation $\bhc + \eta$.

\begin{align*}
n (\lambda + \delta) &= \sumc \psi(y_i - (\bhc + \eta) x_i) x_i \\
\Leftrightarrow n (\lambda + \delta) &= \sumc \psi(y_i - \bhc x_i) x_i - \eta \psi'(y_i - (\bhc + \nu) x_i) x_i^2 \\
\Leftrightarrow n (\lambda + \delta) &= n \lambda - \eta \sumc \psi'(y_i - (\bhc + \nu) x_i) x_i^2 \\
\Leftrightarrow \eta &= -\frac{\delta}{\frac{1}{n} \sumc \psi'(y_i - (\bhc + \nu) x_i) x_i^2},
\end{align*}
where $\nu \in (0, \eta)$.
Therefore, we have $\btc = \bhc - \frac{\delta}{S_c}$ and if $\delta$ is small enough, $S_c > 0$.
Similarly, $\bts = \bhs - \frac{\delta}{S_\star}$ with $S_\star > 0$.

Next we need to show that $\loss(\btc; \l + \delta) < \loss(\bts; \l + \delta)$
A Taylor series expansion of $\loss(\bhc + \delta/S_c; \l + \delta)$ gives
$$
\loss(\bhc + \delta/S_c; \l + \delta) = \loss(\bhc; \l) + \delta \bhc - \frac{\delta^2}{S_c} + \frac{2 \delta^2}{S_c^2} \tilde S_c ,
$$
where $\tilde S_c = \sumc \psi'(y_i - x_i \xi) x_i^2 > 0$ and $\xi \in (0, \delta/S_c)$.
Applying a similar expansion for $\loss(\bhs + \delta/S_\star; \l + \delta)$ and noting that $\loss(\bhc; \l) = \loss(\bhs; \l)$ we get

\begin{equation}\label{eqn:suppl-nonsmooth-proof-gt0}
\begin{aligned}
\loss(\bhs + \frac{\delta}{S_\star}; \l + \delta) - \loss(\bhc + \frac{\delta}{S_c}; \l + \delta)
  &= \delta \bhs - \frac{\delta^2}{S_\star} + \frac{2 \delta^2}{S_\star^2} \tilde S_\star -
     \delta \bhc + \frac{\delta^2}{S_c} - \frac{2 \delta^2}{S_c^2} \tilde S_c \\
  &= \delta (\bhs - \bhc) + \delta^2\left [ \frac{1}{S_c} - \frac{1}{S_\star} + \frac{2 \tilde S_\star}{S_\star^2} - \frac{2 \tilde S_c}{S_c^2}  \right].
\end{aligned}
\end{equation}

Since $\rho$ is quadratic around 0 and bounded, and because less than $b n$ observations have $x_i = 0$, $S_c, S_\star, \tilde S_c, \tilde S_\star$ are all bounded and $S_c, S_\star$ are greater than 0 for $|\delta|$ small enough.
Therefore, $0 \leq \left| \frac{1}{S_c} - \frac{1}{S_\star} + \frac{2 \tilde S_\star}{S_\star^2} - \frac{2 \tilde S_c}{S_c^2} \right| < \infty$.
In turn, for $|\delta|$ small enough, \eqref{eqn:suppl-nonsmooth-proof-gt0} is strictly greater than 0 for positive $\delta$ and strictly less than 0 for negative $\delta$.

\end{proof}

\subsubsection{Example scenario}\label{sec:suppl-nonsmooth-example}

Here we consider a simple case where the conditions required for Proposition~\ref{prop:suppl-nonsmooth} hold with high probability.
Consider a bounded robust loss function $\rho$ such that $\rho(x) = \rho_\infty$ for all $|x| > c_1$ and $\rho$ is quadratic for $|x| \leq c_1$.
Examples of such loss functions are Hampel's loss \parencite{hampel_robust_1986} or the GGW and LQQ loss functions \parencite{koller_sharpening_2011}).
Other loss functions which are at least approximately quadratic in an open neighborhood of 0 would have similar behavior.

When we have $n$ independent realizations from the simple model
$$
y_i = \begin{cases}
  x_i \bc + \gamma_{c,i} & i \in \mathscr{C} \\
  x_i \bs + \gamma_{\star,i} & i \notin \mathscr{C},
\end{cases}
$$
where $\bc > 0$, $\bs - \bc$ large enough (see below), $x_i$ i.i.d.\ $N(0, 1)$, $\gamma_{c,i}$ and $\gamma_{\star,i}$ i.i.d.\ $N(0, \sigma_c^2)$ and $N(0, \sigma_\star^2)$, respectively, and $\mathscr{C} \subset \{1, \dotsc, n\}$ such that $|\mathscr C| = b n < n / 2$.
For simplicity we assume that $b n$ and hence $(1 - b) n$ are integer.
In this setting we can choose the parameters such that the conditions on $\bhc$ and $\bhs$ for Proposition~\ref{prop:suppl-nonsmooth} are satisfied with arbitrarily high probability.

In the following we will consider a ``bad'' minimum $\bhc \in \mathcal B_c = [\bc - \frac{\l n}{(b n - 2)} - \delta, \bc - \frac{\l n}{(b n - 2)} + \delta]$ and a ``good'' minimum $\bhs \in \mathcal B_\star = [\bs - \frac{\l n}{(n (1 - b) - 2)} - \delta, \bs - \frac{\l n}{(n (1 - b) - 2)} + \delta]$.
We assume that $\bc$ and $\bs$ are far enough apart such that the objective function $\loss(\bhc; \l)$ depends only on observations $i \in \mathscr C$ and $\loss(\bhs; \l)$ depends only on observations $i \notin \mathscr C$ with probability at least $1 - \kappa$.
In other words, the residuals for observations in $\mathcal C$ are within the quadratic part of the loss function for any $\bhc$ and in the bounded region for any $\bhs$, and vice versa for observations not in $\mathcal C$.

\begin{proof}
Since we assume that the robust loss function $\rho$ is bounded, we need to show that for all $\bhc \in \mathcal B_c$ all residuals for observations in $\mathcal c$ are less than $c_2$ in absolute value with high probability.
Writing $r_{c,i} = y_i - \bhc x_i$ we have that $r_{c,i} = \gamma_{c,i} + \frac{\l n}{(b n - 2)} x_i + \delta_c$ for all $i \in \mathcal{C}$ and hence
\begin{align*}
\P{|r_{c,i}| < c_2 } 
  & = \P{|\gamma_{ci} + \frac{\l n}{(b n - 2)} x_i + \delta_c| < c_2 } \\
  & = 1 - 2 \Phi\left(- \frac{c_2}{\sqrt{\sigma_c^2 + \frac{\l^2 n^2}{(b n - 2)^2} + \delta_c^2 }} \right) \\
  & = p_{c, \mathcal C} > 0.
\end{align*}
Similarly, for $i \notin \mathcal{C}$,
\begin{align*}
\P{|r_{c,i}| > c_1 } 
  & = \P{|\gamma_{ci} + (\bs - \bc) x_i + \frac{\l n}{(b n - 2)} x_i + \delta_c| > c_1 } \\
  & = 2 \Phi\left(- \frac{c_1}{\sqrt{\left(\sigma_c^2 + \frac{\l^2 n^2}{(b n - 2)^2} + \delta_c^2 \right) + (\bs - \bc)^2 }} \right) \\
  & = p_{c, \mathcal C^c} \gg 1 - p_{c, \mathcal C}.
\end{align*}
The probability $p_{c, \mathcal C^c}$ can be made arbitrarily large by moving $\bc$ and $\bs$ arbitrarily far apart.
Since $r_i$ are i.i.d., $\P{\forall i \in \mathcal{C}\colon |r_{c,i}| \leq c_2 \wedge \forall i \notin \mathcal{C}\colon |r_{c,i}| > c_1} = p_{c, \mathcal C}^{bn} p_{c, \mathcal C^c}^{(1-b)n} \gg 0$.

The same calculations can be done for $\bhs \in \mathcal B_\star$, and hence the objective function value at $\bhs$ and $\bhc$ do not depend on the same observations with arbitrarily high probability.
\end{proof}

Conditioned on the partitioning of the observations from above, the objective function has at least one minimum in each of $\mathcal B_c$ and $\mathcal B_\star$ with probability at least $1 - \kappa$, i.e.,

\begin{equation}\label{eqn:supp-ex1-step2-prob}
\P{\exists (\bhc, \bhs) \in \mathcal B_c \otimes \mathcal B_\star \colon \loss'(\bhc; \l) = \loss'(\bhs; \l) = 0 } > 1 - \kappa.
\end{equation}

\begin{proof}
Consider $\bhc= \bc - \frac{\l n}{(b n - 2)} + \Delta$.
For $\bhc$ to be a minimum, the derivative of the penalized loss must be 0.
As we assume that $\bhc > 0$, this is equivalent to:

\begin{equation}\label{eqn:supp-ex1-step2-deriv0}
n\l = \sumc \psi \left(y_i - \bc x_i - \frac{\l n}{(b n - 2)}  x_i + \Delta  x_i \right) x_i.
\end{equation}

Since $\rho$ is quadratic at 0 $\psi$ is linear and hence, for $\sigma_c^2$ and $\l$ small enough, we can re-write \eqref{eqn:supp-ex1-step2-deriv0} as
\begin{align*}
    n\l &= \sumc \gamma_i x_i + \frac{\l n}{(b n - 2)} \sumc x_i^2 + \Delta \sumc x_i^2 \\
    \Leftrightarrow
    \Delta &= \frac{n\l (-1 + \frac{1}{(b n - 2)} \sumc x_i^2) + \sumc \gamma_i x_i}
                   {\sumc x_i^2}.
\end{align*}

Therefore, $\E[\Delta] = 0$ and $\V[\Delta] \leq \frac{1}{(b n - 2)} \left[\sigma_c^2 + \frac{2 \l^2 n^2}{(bn-4)(bn-2)} \right]$.
Similar calculations can be carried out for $\bhs$.
For any given $\epsilon, \kappa > 0$ we can therefore find suitable $\sigma_c^2$, $\sigma_\star^2$ and $n$ to satisfy \eqref{eqn:supp-ex1-step2-prob}.
\end{proof}

Furthermore, there exists a sequence $\l_n$ such that the objective function values at $\bhc$ and $\bhs$ are within an $\epsilon$ neighborhood with arbitrarily high probability $1 - \kappa$, i.e., 

\begin{equation}\label{eqn:supp-ex1-step3-prob-limit}
\lim_{n \to \infty} \P{\loss(\bhs; \l_n) = \loss(\bhc; \l_n) } > 1 - \kappa.
\end{equation}

\begin{proof}
Define $D = \loss(\bhs; \l_n) - \loss(\bhc; \l_n)$.
Since the loss function is quadratic in a neighborhood around 0, we can write $D$ as

\begin{align*}
D =&\, \rho_\infty (b - (1 - b)) + \\
  &\frac{1}{n} \left(\sums \gamma_i^2 - \sumc \gamma_i^2 \right) + \\
  &\l_n^2 \left(\frac{n}{(n(1 - b) - 2)^2} \sums x_i^2 - \frac{n}{(n b - 2)^2} \sumc x_i^2 \right) + \\
  &2 \l_n \left(\frac{1}{(n(1 - b) - 2)} \sums \gamma_i x_i - \frac{1}{(n b - 2)} \sumc \gamma_i x_i \right) + \\
  &2 \l_n \left(\frac{\Delta_\star}{(n(1 - b) - 2)} \sums x_i - \frac{\Delta_c}{(n b - 2)} \sumc x_i \right) + \\
  &\frac{1}{n} \left(\Delta_\star^2 \sums x^2 - \Delta_c^2 \sumc x_i^2 \right) + \\
  &\frac{2}{n} \left(\Delta_\star \sums \gamma_i x_i - \Delta_c \sumc \gamma_i x_i \right) + \\
  &\l_n^2 \left(-\frac{n}{n(1-b) - 2} + \frac{n}{nb - 2} \right) \\
  &\l_n \left(\bs-\bc + \Delta_\star - \Delta_c \right).
\end{align*}

Setting the expectation of $D$ to 0 yields
\begin{align*}
0 = \E[D] =&\, \rho_\infty (2 b - 1) + \\
  &(1 - b) \sigma_\star^2 - b \sigma_c^2 + \\
  &\l_n^2 \left( \frac{n^2 (1 - b)}{(n(1 - b) - 2)^2} - \frac{n^2 b}{(n b - 2)^2}\right) + \\
  &\Delta_\star^2 (1 - b) - \Delta_c^2 \sigma_c^2 b + \\
  &\l_n^2 \left(-\frac{n}{(n(1-b) - 2)} + \frac{n}{(nb - 2)} \right) \\
  &\l_n \left(\bs-\bc + \Delta_\star - \Delta_c \right) \\
=&\,
  \l_n^2 \left[ \frac{n^2 (1 - b)}{(n(1 - b) - 2)^2} - \frac{n}{(n(1-b) - 2)} - \frac{n^2 b}{(n b - 2)^2} + \frac{n}{(nb - 2) } \right] + \\
  & \l_n \left(\bs-\bc + \Delta_\star - \Delta_c \right) + \\
  & \rho_\infty (2b - 1) + (1 - b) \sigma_\star^2 - b \sigma_c^2.
\end{align*}

Setting $A_n = \left[ \frac{n^2 (1 - b)}{(n(1 - b) - 2)^2} - \frac{n}{(n(1-b) - 2)} - \frac{n^2 b}{(n b - 2)^2} + \frac{n}{(nb - 2) } \right]$, $B = \bs-\bc + \Delta_\star - \Delta_c$ and $C = \rho(\infty) (b - (1 - b)) + (1 - b) \sigma_\star^2 - b \sigma_c^2$, we can see that $A_n < 0$ with $\lim_{n \to \infty} A_n = 0$ and $B > 0$.
Further, we can choose $\sigma_c$ and $\sigma_\star$ such that $C < 0$.
Now if $\bs - \bc = o((\sigma_\star^2 - \sigma_c^2)/n)$ then $B^2 - 4 A C > 0$ and hence there exists a $\l_n > 0$ such that $\E[D] = 0$.
Specifically,

$$
\lim_{n \to \infty} \l_n = \frac{b \sigma_c^2 - (1 - b) \sigma_\star^2  + \rho_\infty (1 - 2 b)} {\bs - \bc + \Delta_\star - \Delta_c}.
$$

Moreover, $\V[D] = \frac{\sqrt{2} b}{n} \left(\sigma_c^2 + C_{cn}^2 \right) + \frac{\sqrt{2} (1 - b)}{n} \left(\sigma_\star^2 + C_{\star n}^2 \right)$ with $C_{cn} = \frac{\l_n n} {(n b - 2)} + \Delta_c$ and $C_{\star n} = \frac{\l_n n} {(n (1 - b) - 2)} + \Delta_\star$ which goes to 0 as $n$ goes to infinity.
Therefore, $\lim_{n\to\infty}\P{|D| > \epsilon} = 0.$
\end{proof}

\begin{figure}
    \centering
    \includegraphics{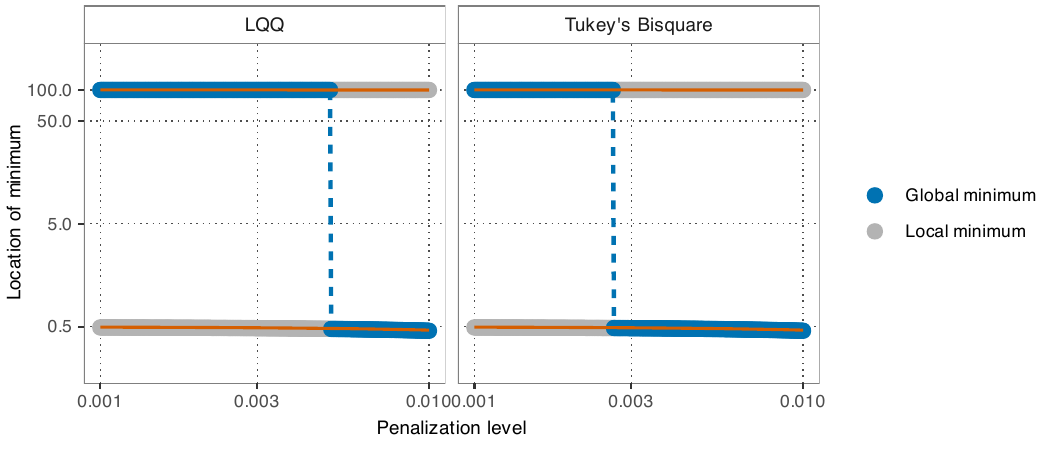}
    \caption{Demonstration of a non-smooth regularization path for a penalized M-estimator of regression in a simulation following the example scenario from Section~\ref{sec:suppl-nonsmooth-example} (with $\sigma_c=0.01$, $\sigma_\star = 0.1$, $\bc = 0.5$, $\bs = 100$, $b = 0.3$ and $n = 100$).
    We show the location of local minima when using the LQQ $\rho$ function (left panel) and Tukey's bisquare $\rho$ function (right panel).
    Gray dots represent local minima and blue dots indicate the global minimum.
    The orange lines depict the expected value of the minima at $\bhc = \bc - \frac{\l n}{n b - 2}$ and $\bhs = \bs - \frac{\l n}{n (1 - b) - 2}$.}
    \label{fig:suppl-nonsmooth-demo}
\end{figure}

\subsection{Additional Empirical Results}\label{sec:suppl-additional-empirical-results}

\subsubsection{Real-world Application II: Gene Pathway Recovery}\label{sec:appl-pw}

The second application is a gene pathway recovery analysis using data from \textcite{pfister_stabilizing_2021}.
The data set contains preprocessed protein expression levels from 340 genes from seven different pathways for 315 subjects.
Following the original analysis \parencite{pfister_stabilizing_2021} we define as response the average expression of proteins on the \textit{Cholesterol Biosynthesis} pathway.
We further add a Laplace-distributed noise to achieve a signal-to-noise ratio (SNR) of 1.
We split the data set into a training data set comprising 165 randomly selected subjects and a test data set with the remaining 150 subjects.
We further contaminate the training data set by replacing the response for 25 subjects (15\%) with the average expression of proteins on the \textit{Ribosome} pathway, again adding Laplace noise with a SNR of 1.
The PENSE estimator is tuned to a breakdown point of 25\% and uses an EN penalty with $\alpha=0.5$.
For RIS-CV we retain up to 40 local solutions.

In contrast to the previous application, here we can evaluate the actual prediction performance evaluated on an independent test set.
In Figure~\ref{fig:appl-pw-cv}a we see both the CV estimated prediction errors and the true RMSPE evaluated on the independent test set.
In this application N-CV is not as affected by local optima than in the previous example.
However, there is still a discontinuity around $\lambda\approx10^{-1}$ for N-CV.
RIS-CV, on the other hand, again yields a smoother CV curve and hence a more reliable selection of the penalization level.
The actual prediction error of the RIS-CV solution is about 3\% lower than that of the solution selected by N-CV, and computations are also roughly 3\% faster.

Both the true RMSPE for PENSE in Figure~\ref{fig:appl-pw-cv}a and the $L_1$ norm of the selected solution in Figure~\ref{fig:appl-pw-cv}b again highlight the benefits of considering more than just the global minimum.
With the ability of RIS-CV to estimate the prediction error of all local minima we can select a more appropriate minimum than with N-CV.
The instability in the true RMSPE from N-CV is indicative of the global minimum alternating between two or more distinct local minima.
This is also visible in the the $L_1$ norm of the chosen solutions in Figure~\ref{fig:appl-pw-cv}b, showing substantial instability in the global minimum.
While there is also some instability in the penalization path of RIS-CV, it seems less severe and further from the penalization levels of interest.
For convex objective functions the $L_1$ norm of the EN penalty is a monotone function of the penalization level (± small deviations due to the addition of the $L_2$ penalty in the EN formulation).
For the non-convex PENSE objective function this is not necessarily the case, as evident in the plot.
For the minimum selected by RIS-CV, however, the monotonicity mostly holds in the region of interest.

Figure~\ref{fig:appl-pw-ind} shows the CV curves from two independent 10-fold CV runs, one set estimated by naïve CV, the other by RIS-CV.
As seen in the first application, the estimated prediction error as a function of the penalization level is much less smooth for naïve CV than for RIS CV.

\begin{figure}[t]
  \centering
  \includegraphics[width=1\linewidth]{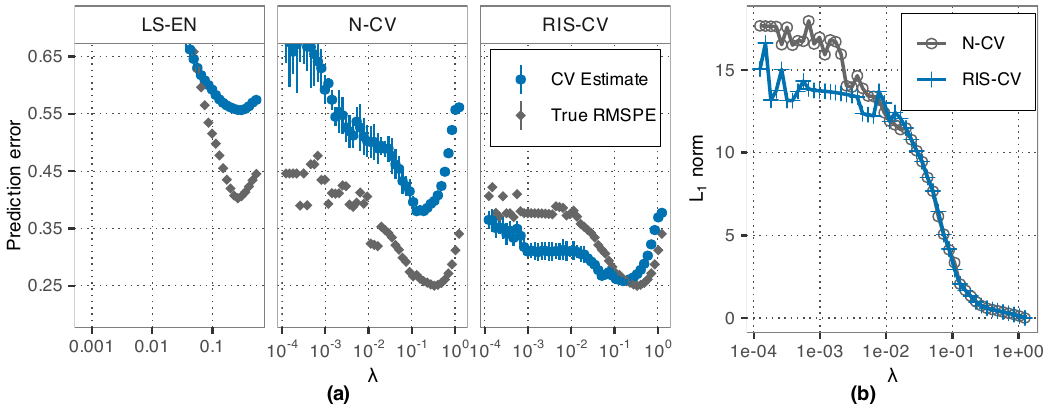}
  \caption{%
  Results from the pathway recovery analysis:
  (a) CV curves from five replications of 10-fold CV (blue) alongside the true RMSPE of the fitted models applied to the independent test set (gray), and
  (b) the $L_1$ norm (excluding the intercept) of the selected minimum of the PENSE objective function versus the penalization level $\lambda$.
  The left panel in (a) shows the CV estimated mean absolute prediction error and the true RMSPE for classical EN (``LS-EN'').
  The center panel in (a) shows the naïve CV estimated $\tau$-size of the prediction error and the true RMSPE of PENSE (``N-CV'').
  The right panel in (a) shows the RIS-CV estimated weighted RMSPE of PENSE alongside the true RMSPE (``RIS-CV'').%
  }
  \label{fig:appl-pw-cv}
\end{figure}

\begin{figure}[ht]
  \centering
  \includegraphics[width=1\linewidth]{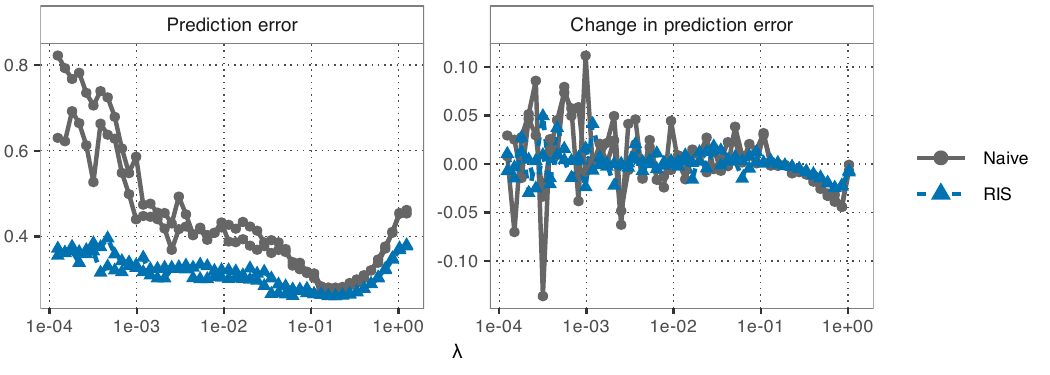}
  \caption{%
  Estimated prediction errors (left) and the respective changes (right) from two independent 10-fold CV runs for PENSE applied to the gene pathway recovery analysis.
  The blue curves represent the estimated weighted RMSPE from RIS-CV, while the gray curves show the $\tau$-size of the prediction errors estimated by naïve CV.
  The estimated prediction error for N-CV is shifted downwards by 0.1 for easier comparison with RIS-CV.%
  }
  \label{fig:appl-pw-ind}
\end{figure}

\subsubsection{Real-world Application III: Determinants of Plasma Beta-Carotene Levels}\label{sec:appl-plasma}

In this additional application we try to determine determinants of plasma beta-carotene levels using publicly available data obtained from \url{http://lib.stat.cmu.edu/datasets/Plasma_Retinol} \parencite{nierenberg_determinants_1989}.
We dummy-code the data and build a model with all available covariates and their interaction with the subject's sex (binary male/female).
This leads to a total of 22 predictors for 315 subjects (42 male, 273 female).
We randomly select a training data set of size $N=100$, stratified among male/female such that 30 subjects in the training data are male and 70 are female.
We deliberately oversample male subjects to ensure sufficient variation in all sex-dependent interaction terms.

In the individual CV runs shown in Figure~\ref{fig:appl-plasma-ind} it is clear that 10-fold naïve CV again is much less stable than RIS-CV.
Particularly for penalty levels of interest N-CV has difficulty estimating the prediction accuracy.
This is also visible when averaging over five replications of 10-fold CV.
N-CV still exhibits substantially more variation and non-smoothness for penalization levels of interest ($\lambda \in [1, 10]$) than RIS-CV.
Here it is obvious that a major driver of the variability in the CV estimated prediction error is not the Monte Carlo error in the CV splits, but the the non-convexity of the objective error.

\begin{figure}[ht]
  \centering
  \includegraphics[width=1\linewidth]{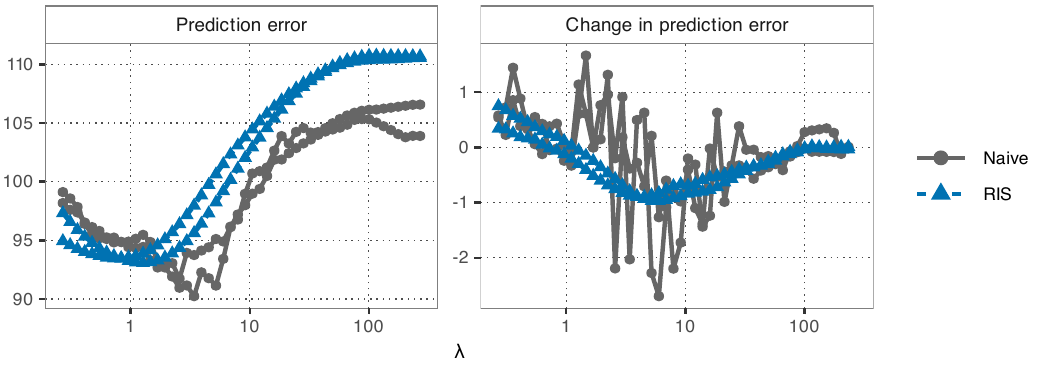}
  \caption{%
  Estimated prediction errors (left) and the respective changes (right) from two independent 10-fold CV runs for PENSE applied to the analysis of plasma beta-carotene levels.
  The blue curves represent the estimated weighted RMSPE from RIS-CV, while the gray curves show the $\tau$-size of the prediction errors estimated by naïve CV.
  The estimated prediction error for RIS-CV is shifted upwards by 25 for easier comparison with N-CV.%
  }
  \label{fig:appl-plasma-ind}
\end{figure}

\begin{figure}[t]
  \centering
  \includegraphics[width=1\linewidth]{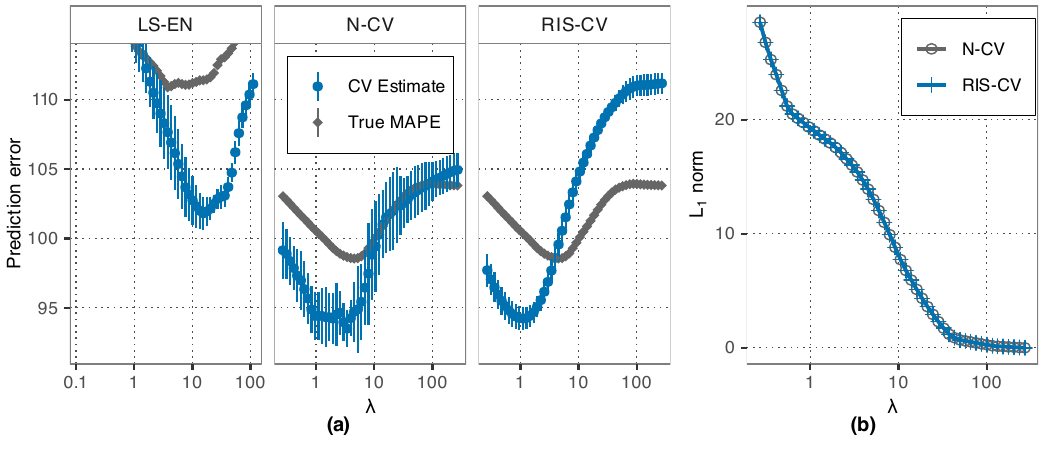}
  \caption{%
  CV curves from five replications of 10-fold CV (blue) for the analysis of plasma beta-carotene levels, alongside the true MAPE of the fitted models applied to the independent test set (gray).
  The left panel (``LS-EN'') shows the CV estimated mean absolute prediction error and the true MAPE for classical EN.
  The center panel (``N-CV'') shows the naïve CV estimated $\tau$-size of the prediction error and the true MAPE of PENSE.
  The right panel (``RIS-CV'') shows the RIS-CV estimated weighted RMSPE of PENSE (shifted upwards by 25) alongside the true MAPE.%
  }
  \label{fig:appl-plasma-cv}
\end{figure}

\subsubsection{Details About the Simulation Settings}\label{sec:suppl-sim-cont-dgp}

The contamination data generating process is defined as follows.
For each contamination signal we first randomly select $\lfloor \log_2(p) \rfloor$ covariates (excluding the first $s$ covariates), denoted by $\mathcal J^* \subset \{s + 1, \dotsc, p\}$.
Then, for three different values of $u_1 = -1.5, u_2 = -1, u_3 = -0.5$ and the respective contamination indices $\mathcal C_1 = \{1, \dotsc, 0.1 n \}$, $\mathcal C_2 = \{0.1 n + 1, \dotsc, 0.2 n \}$, $\mathcal C_3 = \{0.2 n + 1, \dotsc, 0.3 n \}$, in observations $i \in \mathcal C_k$ the covariates and responses are replaced according to the following DGP:
\begin{align*}
x^*_{ij} = \begin{cases}
    x_{ij} & j \notin \mathcal J^* \\
    k_l x_{ij} & j \in \mathcal J^* \\
\end{cases},
&&
\beta^*_{0j} = \begin{cases}
    0 & j \notin \mathcal J^* \\
    k_v & j \in \mathcal J^* \\
\end{cases},
&&
y^*_{i} = {\mat x^*_i}\tr \mat\beta^*_0 + \varepsilon^*_i.
\end{align*}
The constant $k_l$ is chosen such that $\mat x^*_i$ is at least twice as far from the center (in terms of the Mahalanobis distance) than all the other non-contaminated observations.
The error term $\varepsilon^*_i$ is Gaussian with variance such that $\mat\beta^*_0$ achieves a SNR of 10.

\subsubsection{Additional Simulation Results}\label{sec:supp-additional-sim}

Here we present additional results from the simulation study.
Figure~\ref{fig:suppl-sim-examples} shows the CV curves and the corresponding true prediction errors for simulation runs from two different settings.
In the left panel it is evident that for heavy tailed errors RIS-CV yields both better prediction accuracy and also a smoother CV curve.
While for Gaussian errors the difference in prediction performance is marginal, a practitioner may have a difficult time selecting a suitable penalization level because the CV curve does not exhibit the usual ``U'' shape.

\begin{figure}[t]
  \centering
  \includegraphics[width=1\linewidth]{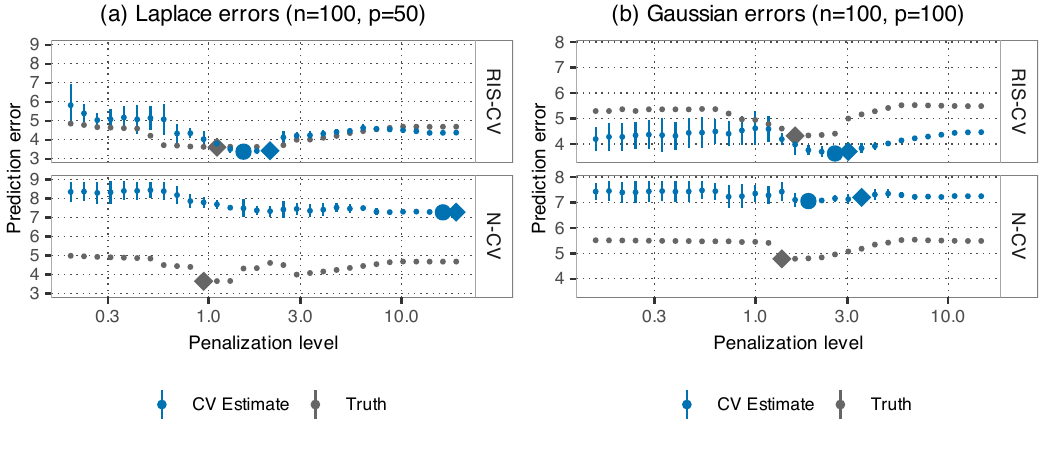}
  \caption{%
  Examples of CV curves estimated by RIS-CV (top) and N-CV (bottom). For CV estimates (blue curves with error bars), the large dots depict the solution with smallest prediction error, and the diamonds depict the solutions within 1 standard error. For the true prediction errors (gray curves without error bars), the diamond depicts the optimal solution.%
  }
  \label{fig:suppl-sim-examples}
\end{figure}

\end{document}